\documentclass[manuscript,screen,acmsmall]{acmart}

\usepackage{amssymb}
\usepackage{mathtools}
\usepackage{stmaryrd}    
\usepackage{booktabs}
\usepackage{listings}
\usepackage{graphicx}

\acmJournal{TOSEM}
\acmYear{2026}
\acmVolume{0}
\acmNumber{0}
\acmArticle{0}
\acmMonth{7}
\setcopyright{none}          

\newcommand{\Av}{\mathcal{A}}                 
\newcommand{\Frag}{\mathcal{F}}               
\newcommand{\Full}{\mathcal{U}}               
\newcommand{\Time}{\mathsf{Time}}
\newcommand{\Val}{\mathsf{Val}}
\newcommand{\TVal}{\mathsf{TVal}}
\newcommand{\Pos}{\mathsf{Pos}}
\newcommand{\av}{\mathsf{av}}
\newcommand{\val}{\mathsf{val}}
\newcommand{\stampf}{\mathsf{stamp}}
\newcommand{\windowf}{\mathsf{window}}
\newcommand{\scanf}{\mathsf{scan}}
\newcommand{\joinf}{\mathsf{join}}
\newcommand{\resamplef}{\mathsf{resample}}
\newcommand{\asoff}{\mathsf{asof}}
\newcommand{\retrievef}{\mathsf{retrieve}}
\newcommand{\decidef}{\mathsf{decide}}
\newcommand{\letf}{\mathsf{let}}
\newcommand{\inkw}{\mathsf{in}}
\newcommand{\aggf}{\mathsf{agg}}
\newcommand{\predf}{\mathsf{pred}}
\newcommand{\Mono}{\mathsf{Mono}}
\DeclarePairedDelimiter{\dbrack}{\llbracket}{\rrbracket}
\newcommand{\dom}[1]{\dbrack{#1}}                  
\newcommand{\eff}{\varphi}                         
\newcommand{\rulename}[1]{\textsc{#1}}

\theoremstyle{acmplain}
\newtheorem{theorem}{Theorem}
\newtheorem{lemma}{Lemma}
\newtheorem{proposition}{Proposition}

\theoremstyle{acmdefinition}
\newtheorem{definition}{Definition}
\newtheorem{assumption}{Assumption}

\begin{document}

\title{Look-Ahead-Freedom as Temporal Non-Interference:
A Verifiable Correctness Property for Backtesting and Agentic
Trading Pipelines}

\author{Xavier Fonseca}
\affiliation{%
  \institution{Academy for AI, Games and Media, Breda University of
  Applied Sciences}
  \city{Breda}
  \country{The Netherlands}}
\email{xavier.fonseca.phd@gmail.com}

\begin{abstract}
Look-ahead bias---using information from after a decision epoch to make
the decision at that epoch---is the dominant way a backtest or a
machine-learning evaluation flatters a system that will disappoint in
deployment. The field manages it with construct-specific recipes and
empirical detectors, which are sound only channel by channel and certify
nothing by their silence. We show that look-ahead-freedom is a formal
property in disguise: fixing an epoch, the demand that the future not
influence the present is \emph{temporal non-interference} over a
time-indexed information lattice. From this identification we develop a
pipeline calculus separating a datum's availability from its reference
time, and settle the problem's boundary. Where availability may depend on
data values, look-ahead-freedom is undecidable ($\Pi^0_1$-hard): leakage
is recursively enumerable but freedom is not. On the value-independent
fragment---covering windowing, resampling, joins, point-in-time and
vintage reads, and agentic retrieval---we give a type-and-effect system
that is sound and decidable in linear time. An artifact confirms the
theory: the check scales linearly, an independent oracle witnesses no
leak in any accepted pipeline, and the checker catches every planted leak
that differential and tiling detectors miss.\footnote{\textbf{Reproducibility.} A public
artifact accompanies this paper, containing the checker, the baseline
detectors, the dynamic oracle, the adversarial corpus, and synthetic data
on which every qualitative result can be re-run. The headline figures
reported here are computed from proprietary market data that we are not
permitted to redistribute; the artifact reproduces the method and the
qualitative claims, not those exact figures. See the Data and Code
Availability statement (before the references).}
\end{abstract}

\begin{CCSXML}
<ccs2012>
   <concept>
       <concept_id>10011007.10011074.10011092</concept_id>
       <concept_desc>Software and its engineering~Software verification and validation</concept_desc>
       <concept_significance>500</concept_significance>
   </concept>
   <concept>
       <concept_id>10003752.10010124.10010138.10003555</concept_id>
       <concept_desc>Theory of computation~Type structures</concept_desc>
       <concept_significance>300</concept_significance>
   </concept>
   <concept>
       <concept_id>10003752.10003790.10011740</concept_id>
       <concept_desc>Theory of computation~Program semantics</concept_desc>
       <concept_significance>300</concept_significance>
   </concept>
   <concept>
       <concept_id>10002978.10002991.10002995</concept_id>
       <concept_desc>Security and privacy~Information flow control</concept_desc>
       <concept_significance>300</concept_significance>
   </concept>
</ccs2012>
\end{CCSXML}
\ccsdesc[500]{Software and its engineering~Software verification and validation}
\ccsdesc[300]{Theory of computation~Type structures}
\ccsdesc[300]{Theory of computation~Program semantics}
\ccsdesc[300]{Security and privacy~Information flow control}

\keywords{look-ahead bias, data leakage, non-interference, type-and-effect
systems, decidability, backtesting, agentic systems, point-in-time
correctness}

\maketitle

\section{Introduction}
\label{sec:intro}

A backtest is a claim about a decision that was never made: it reports
what a strategy \emph{would} have decided at each past instant, had it
been running then. The claim is only meaningful if each simulated
decision used exactly the information available at its own instant and
nothing from later. When this fails---when a computation performed for
time $t$ consumes a datum that only became known after $t$---the
backtest measures foresight the live system could never have had. In
quantitative finance this failure is called \emph{look-ahead bias}; in
machine learning it is one species of \emph{data leakage}. Under either
name it is the dominant way an evaluation flatters a system that will
disappoint in deployment, and it is notoriously easy to introduce and
hard to see: a normalisation fitted on the whole sample, a feature
aligned to the wrong timestamp, a restated fundamental read at its final
rather than its as-first-reported value, a universe assembled from the
firms that survived. Recent audits of machine-learning-based science and
of large-language-model trading agents find the problem to be pervasive
and, in the agentic case, subtler still: an agent can leak the future
through its pretraining corpus even when its code is fed only
contemporaneous inputs, so that inspecting the pipeline is not even
sufficient to rule leakage out~\cite{sarkar2025lookahead,ye2026alpha,li2025profit,exec2026assumptions}.

The field's response has been, almost exclusively, \emph{detection and
discipline}. Practitioners maintain catalogues of the ways a pipeline
can leak and matching recipes to avoid each---fit transforms on training
folds only, lag running statistics, split chronologically, embargo
around test windows---and a growing line of benchmarks measures, with
increasing sophistication, how much a given evaluation convention
inflates reported performance~\cite{kaufman2012leakage,benhenda2026lookahead,zhang2026alpha,zhang2026allleaks}.
This body of work is valuable and has sharpened the community's
awareness. But it shares a structural limitation. A recipe is sound only
for the specific channel it addresses and says nothing about the others;
a detector reports the leaks it happens to trigger and certifies nothing
by its silence. There is, at present, no \emph{property}---a statement a
pipeline either satisfies or violates, checkable once and for all---whose
satisfaction certifies that a pipeline is free of look-ahead of every
kind. The field manages the problem by vigilance where it could, in
principle, settle it by construction.

Such a property does exist in a neighbouring discipline, though it has
never been pointed at this problem. \emph{Non-interference}, the central
correctness notion of language-based information-flow
security~\cite{goguen1982security,denning1976lattice,volpano1996sound,sabelfeld2003language},
states that a program's low-security outputs are independent of its
high-security inputs: two runs agreeing on the low inputs produce
identical low outputs regardless of the high ones. It comes with a
mature enforcement technology (security type systems, effect systems)
and, in its timed branch, with a precisely mapped decidability
frontier---non-interference is undecidable for general timed automata
but decidable on identified
subclasses~\cite{gardey2007timed,andre2022guaranteeing}. The observation
that drives this paper is that look-ahead-freedom \emph{is} an instance
of this notion, once the security lattice is replaced by a time-indexed
one. Fixing a decision epoch $t$, the requirement ``information from
times $t' > t$ must not influence the decision at $t$'' is exactly a
non-interference statement in which the protected (``high'') partition is
the future relative to $t$, the observable (``low'') output is the
decision at $t$, and the ordering is temporal availability rather than
clearance. Look-ahead bias is temporal non-interference.

Taking this identification seriously turns an informally managed hazard
into a formal object with the full apparatus of information-flow theory
behind it, and the theory immediately tells us both the bad news and the
shape of the good. The bad news is a hard limit: on a pipeline language
expressive enough to let a datum's \emph{availability} be computed from
other data's \emph{values}, look-ahead-freedom is undecidable---indeed
$\Pi^0_1$-hard---so no analyser can certify freedom for every pipeline
one can write (Theorem~\ref{thm:undecidable}). Leakage is
recursively enumerable (a witnessing future-dependent run can be found
if one exists) but freedom is not, which is precisely why detectors can
exhibit leaks yet never certify their absence. The good news is that the
undecidability is caused by a feature real pipelines do not use.
Backtests, feature constructions, and agentic retrieval steps determine
\emph{when} each datum is available from its schema and schedule, not
from the numeric values flowing through the pipeline. Isolating this as
a syntactic condition yields a fragment---value-independent
availability---on which look-ahead-freedom is decidable, and on which a
static discipline can certify it soundly.

\paragraph{Contributions.} This paper develops that programme in full.

\begin{itemize}
\item We give a time-indexed pipeline calculus (Section~\ref{sec:calculus})
  that separates a datum's \emph{availability} (when it can be read) from
  its \emph{reference time} (which instant it describes), with a
  two-level grammar whose availability sublanguage is value-independent
  and whose membership in the decidable fragment is checkable in linear
  time. We define \emph{look-ahead-freedom} as temporal non-interference
  over the resulting time-indexed information lattice
  (Definition~\ref{def:laf}), via an exact two-run agreement condition
  on an epoch-parametrised operational semantics
  (Section~\ref{sec:semantics}).

\item We give a type-and-effect system for the fragment and prove it
  \emph{sound}: every pipeline it accepts is free of temporal leakage
  (Theorem~\ref{thm:soundness}), with acceptance decidable in time
  linear in the pipeline's size under bounded availability-term
  complexity (Theorem~\ref{thm:decidable-check}). The proof is a two-run
  logical relation whose crux is the treatment of re-stamping, the
  channel by which a future value can be relabelled to an admissible
  time; we characterise the analysis's precision exactly, including the
  opaque-value-operation cases where a sound analysis must
  conservatively reject a semantically clean pipeline
  (Proposition~\ref{prop:rel-complete}). We further harden the treatment
  of the series operators---windowing, scans, joins, and
  resampling---and isolate non-causal resample/join alignment as the
  genuine leakage channel there, discharged by an explicit $O(1)$ checker
  obligation (Section~\ref{sec:series-hardening}).

\item We establish the boundary that makes the fragment necessary rather
  than merely convenient: on the full language look-ahead-freedom is
  undecidable and $\Pi^0_1$-hard, by reduction from the halting problem
  through data-dependent availability (Theorem~\ref{thm:undecidable}).
  Detection is r.e.; certification of freedom is not.

\item We validate the checker empirically on a public, clean-room
  artifact (Section~\ref{sec:evaluation}). The check scales linearly in
  pipeline size (measured log--log slope $1.023$, 95\% bootstrap CI
  $[1.021, 1.026]$, across four orders of magnitude); on five archetypes
  drawn from proprietary market data (not redistributable) an independent dynamic oracle
  witnesses no leak in any accepted pipeline, corroborating soundness;
  and on an adversarial corpus the sound checker misses none of $33$
  planted leaks while a differential two-run detector of the kind used in
  practice misses $18$ ($54.5\%$) and a window-tiling detector misses all
  $33$, with the checker's cost a precisely characterised and confined
  set of false positives on opaque value operations.
\end{itemize}

The through-line is a single conversion. The empirical detectors answer
``did this pipeline, on the data and epochs I tried, exhibit a leak?'';
the property developed here answers ``can this pipeline, on any data and
at any epoch, exhibit a leak?'', and answers it soundly and---on the
pipelines practitioners actually write---decidably and in linear time.
It moves look-ahead-freedom from something a team hopes it has achieved
into something a checker can certify.

\section{Background and Related Work}
\label{sec:background}

Our contribution sits at the intersection of three literatures that
have, to date, developed independently: (i) the applied treatment of
\emph{look-ahead bias} and \emph{data leakage} in quantitative finance
and machine learning; (ii) the theory of \emph{information-flow
security} and \emph{non-interference} in programming languages and
timed systems; and (iii) \emph{data provenance} and \emph{stream /
dataflow correctness} in data-engineering systems. We review each in
turn, then state precisely the gap that none of them closes and that
this paper addresses: a formal, verifiable \emph{property}---as opposed
to a checklist, an audit, or an empirical detector---certifying that a
computational pipeline does not allow information from the future to
influence a decision made in the past. The stance mirrors a recurring
lesson in decision-focused evaluation, where the quantity that governs a
decision's correctness is not the one conventionally
measured~\cite{fonseca2026decision}.

\subsection{Look-Ahead Bias and Data Leakage as an Applied Problem}
\label{sec:bg-leakage}

Look-ahead bias---using information not available at decision time---is
recognised across quantitative finance and machine learning as a
pervasive threat to the validity of reported results. In the
backtesting and portfolio literature it is treated as one of a family
of evaluation failures alongside survivorship
bias~\cite{brown1992survivorship}, backtest
overfitting~\cite{bailey2014pseudo,bailey2014deflated,bailey2017probability}, transaction-cost
neglect, and regime-shift blindness. The
rapid migration of large language models into trading systems has
sharpened the problem: a taxonomy-oriented evaluation study of
financial multi-agent systems argues that coordination and evaluation
protocol, rather than model scale, drive measured performance and that
claims across systems are hard to compare~\cite{eval2026reliable}, and
an audit-oriented evidence map of LLM trading agents finds acute
\emph{protocol incomparability}, reporting that only a small minority of
surveyed empirical studies document time-consistent train/test split
protocols, explicit transaction-cost models, or survivorship handling,
and that none reach the highest tier of
reproducibility~\cite{xia2026agentic}. Work on execution assumptions and
reproducibility in LLM-based trading makes the mechanism concrete,
emphasising that mundane choices---executing at the signal-generating
close versus the next tradable price, forming a universe from future
constituents, omitting costs---can move a strategy from apparently
profitable to unusable, and that such retrospective leakage can be
severe enough to overturn a strategy's apparent
edge~\cite{exec2026assumptions}. For agentic pipelines the channel is
subtler still. An agent can leak future information through its
pretraining data or a retrieval corpus even when its code is fed only
contemporaneous inputs---a form of look-ahead that has no analogue in
classical backtesting and that inspection of the pipeline code cannot
rule out~\cite{sarkar2025lookahead}. Reported returns then need not
reflect a causal pipeline at all~\cite{ye2026alpha}, and a growing line
of benchmarks quantifies exactly this decay once a model's knowledge
window ends~\cite{li2025profit}, some at claim-level granularity, using
Shapley attribution to isolate which parts of a rationale draw on leaked,
post-cutoff information~\cite{zhang2026allleaks}. Dedicated benchmarks
have begun to \emph{measure} look-ahead bias directly in point-in-time
language models, quantifying alpha decay across temporally distinct
regimes~\cite{benhenda2026lookahead}, and paired one-switch benchmarks
isolate how individual evaluation conventions inflate backtested
performance~\cite{zhang2026alpha}. All of this work supplies empirical
measurement; none supplies the formal property this paper provides.

The same problem is described, in almost identical terms, far outside
finance. In network-intrusion detection, temporal leakage is addressed
through a catalogue of mitigation strategies---per-split normalisation
computed only on past data, lagged running statistics, causal masking of
attention, and chronological rather than random
splits~\cite{nids2026survey}. In Android-malware detection, temporal
leakage is distinguished taxonomically from train--test leakage and
identified as always an experiment-design flaw, with sliding-window
datasets offered as the remedy~\cite{android2024leakage}. Practitioner
and vendor guidance converges on the same recipes: encapsulating
preprocessing so that fit/transform never sees the test partition, and
using time-series cross-validation for temporally dependent
data~\cite{kaufman2012leakage}.

Two properties characterise this entire body of work. First, it is
overwhelmingly \emph{informal}: leakage is defined by example and
addressed by discipline, tooling conventions, and vigilance rather than
by a stated property that a pipeline either satisfies or violates.
Second, its remedies are \emph{sound only by construction and only for
the specific construct addressed}---per-split normalisation prevents one
leakage channel but says nothing about others, and no combination of
such recipes yields a guarantee that a given pipeline is leakage-free.
The field has, in effect, a growing list of ways to leak and a matching
list of ways to avoid each, but no notion of a certificate of absence.
The closest step toward a stated property casts point-in-time
correctness in filtration terms---requiring that every decision's
information set lie within the natural filtration
$\mathcal{F}_t$~\cite{fonseca2026pit}---but does so for a fixed family of
trading strategies rather than as a decidable discipline over a general
pipeline language, which is the gap this paper closes.

\subsection{Information-Flow Security and Non-Interference}
\label{sec:bg-noninterference}

The property the applied literature lacks has, in a different domain, a
mature theory. \emph{Non-interference} formalises information-flow
security: partitioning a program's variables into high- and
low-security classes, a program is non-interferent if its low-security
outputs are independent of its high-security inputs---equivalently, two
executions agreeing on low inputs produce identical low outputs
regardless of high inputs~\cite{goguen1982security,volpano1996sound}. The security levels are organised as a lattice, following Denning~\cite{denning1976lattice}, and secure flow can be certified statically against that lattice~\cite{denning1977certification}. The
subsequent three decades of language-based enforcement are surveyed by
Sabelfeld and Myers~\cite{sabelfeld2003language}. The property is
enforced by information-flow type systems, by taint-tracking, and, at
the hardware level, by gate-level information-flow tracking, which
establishes sufficient conditions under which a system provably
satisfies non-interference~\cite{tiwari2009complete}.

Crucially for our purposes, the theory has an explicitly \emph{temporal}
branch. \emph{Timed} notions of non-interference have been defined on
timed automata, where it is known that adding timing constraints can
turn a secure system insecure, and---decisively for the shape of our
contribution---that the general problem of deciding timed strong
non-deterministic non-interference is \emph{undecidable}, while specific
cosimulation-, bisimulation-, and state-based variants are
\emph{decidable}~\cite{gardey2007timed}. This ``undecidable in general,
decidable on a subclass'' pattern recurs in the parametric-timed-opacity
setting of Andr\'e et al.~\cite{andre2022guaranteeing}, whose undecidability
result for general parametric timed automata (via a
reachability/halting reduction) and decidability result for the L/U-PTA
subclass we take as methodological precedent for locating our own
boundary. Related lines develop complexity-oriented ``temporal
non-interference'' by transposing information-flow security into the
timed-automata setting~\cite{alur1994theory}, where the secret is
\emph{when} an action occurs and decidability sits on a
knife-edge~\cite{gardey2007timed,andre2022guaranteeing}.

This literature supplies exactly the missing apparatus: a definition
schema (independence of a protected partition), enforcement mechanisms
(type systems, model checking), and---most valuable---a known
decidability frontier telling us in advance that the general temporal
problem is intractable and that the research contribution lies in
isolating a decidable fragment that covers the pipelines of interest.
What it has never been applied to is look-ahead bias: the ``high''
partition in every prior treatment is a secret to be protected from an
adversary, never \emph{data from the future} to be withheld from a
decision made in the past.

\subsection{Provenance and Dataflow / Stream Correctness}
\label{sec:bg-provenance}

A third literature reasons about how data moves through computational
pipelines. Scientific-workflow \emph{provenance} records the derivation
of results, distinguishing prospective provenance (the workflow
specification) from retrospective provenance (the executed run), and
represents causality as a dependency graph over artefacts and processes;
the community has standardised these notions in OPM and W3C PROV, and
separates why- from where-provenance~\cite{cheney2009provenance}.
Provenance thus records \emph{which} inputs produced an output, but it
carries no notion of a time-indexed admissibility condition: it can tell
us an output depended on a given input, but not that the input was drawn
from a timestamp later than the output's decision time, nor does it
define a violation. Even where provenance is extended toward \emph{temporal attribution}
---quantifying which inputs influenced an output over time---it
remains a descriptive record, not a correctness guarantee: it reports
what did flow, not what \emph{may} flow across the availability
boundary.

The stream-processing and dataflow-language communities come closest to
our concern. The dataflow model formalises correct processing of
unbounded, out-of-order data using watermarks and timers to reason about
event-time completeness~\cite{akidau2015dataflow}. Most pointedly, the
Causify DataFlow framework for high-performance ML stream computing
states our exact hazard---that computations at time $t$ may access
observations from a later time $t' > t$, which it calls
future-peeking---and builds a substantial formal treatment around
it~\cite{causify2025dataflow}. It defines \emph{causal computation} (a
node's output at simulation time depends only on data timestamped
earlier) and \emph{point-in-time idempotency} (once the input window is
at least a minimal context length $L$, the output at $t$ is independent
of how far back the window starts), and it \emph{proves} from the latter
a chain of \emph{tilability} and tile-correctness results guaranteeing
that batch execution equals streaming execution.

The decisive distinction is what receives a proof and what does not.
Point-in-time idempotency and tilability are established as theorems and
enforced by construction. \emph{Causality itself}---the future-peeking
property that is our concern---is, in the framework's own terms,
\emph{detected}: violations are surfaced through a testing and replay
framework and validated by tiling tests that partition the data in
multiple ways, a differential-testing oracle that compares batch against
streaming (or across tilings) and flags a leak when outputs disagree.
There is no soundness theorem certifying a computation causal, and no
characterisation of when such certification is possible.

This is the closest existing notion to look-ahead-freedom, and we
position against it precisely. First, DataFlow's causality check is an
\emph{empirical detector}: it exposes a violation when a chosen set of
tile lengths triggers an inconsistency, but certifies nothing when none
is observed---absence of detection is not proof of absence. We give
instead a \emph{sound} static property: an accepted pipeline is provably
look-ahead-free (Theorem~\ref{thm:soundness}). Second, we explain
\emph{why} their causality story is a test rather than a proof:
certifying look-ahead-freedom is not recursively enumerable on the full
language (Theorem~\ref{thm:undecidable}), so differential testing is, in
a precise sense, the best a general-purpose method can do---and a sound
certificate is available only on a decidable fragment, which we identify.
Third, our treatment of causal windows is the static-effect analogue of
their point-in-time idempotency: where they establish a minimal context
length $L$ semantically per node and validate it by tiling, our
type-and-effect discipline bounds availability statically and checks it
without execution. The two are kin; the contribution is the shift from a
tested, per-node semantic condition to a sound, decidable, whole-pipeline
static guarantee, together with the boundary result that says no better
is possible in general.

\subsection{The Gap and Our Position}
\label{sec:bg-gap}

Across all three literatures the same shape recurs and remains unfilled.
The applied leakage literature has the \emph{problem} but treats it
informally, with construct-specific recipes and no certificate of
absence. The information-flow literature has the \emph{property form and
enforcement machinery}---including a temporal branch with a known
decidability frontier---but has only ever applied it to adversary-facing
secrecy, never to temporal data availability. The provenance and
dataflow literature has the \emph{pipeline substrate} and, in its most
recent stream-computing instance, an empirical \emph{detector} for
future-to-past information use, but no formal property and no guarantee.

We close this gap by making one identification precise and exploiting
it. \textbf{Look-ahead bias is temporal non-interference.} Fixing a
decision time $t$, ``information from times $t' > t$ must not influence
the decision at $t$'' is exactly a non-interference statement in which
the protected (``high'') partition is the future relative to $t$ and the
observable (``low'') output is the decision at $t$, taken over a
time-indexed information lattice rather than a security lattice.
Leakage-freeness of a backtest, feature-construction, or agentic trading
pipeline is then the statement that the pipeline satisfies temporal
non-interference with respect to this time-indexed partition.

From this identification the paper delivers what the three literatures
individually cannot: a formal definition of \emph{look-ahead-freedom} as
temporal non-interference (Definition~\ref{def:laf}); the decidability
boundary that separates the pipelines one can certify from those one
cannot (Theorems~\ref{thm:soundness} and~\ref{thm:undecidable}); a sound
static checker for the former; and an empirical validation on real
backtesting and agentic-trading pipelines
(Section~\ref{sec:evaluation}). The detailed contributions are set out in
Section~\ref{sec:intro}. The upshot is a single conversion---a problem
the field currently manages by vigilance becomes one it can certify by
construction.

\section{A Time-Indexed Pipeline Calculus}
\label{sec:calculus}

We now make the identification of Section~\ref{sec:background} precise.
This section fixes the object of study---a small calculus of
time-indexed data pipelines---and isolates the syntactic fragment on
which look-ahead-freedom will turn out to be decidable. The calculus is
deliberately minimal: it contains exactly the constructs that recur in
backtesting, feature construction, and agentic-retrieval pipelines
(windowing, causal folds, resampling, point-in-time reads, and
context retrieval), and nothing else. Section~\ref{sec:typesystem} then
gives a static discipline over this calculus and proves it sound and
decidable; Section~\ref{sec:decidability} shows that the same property
is undecidable once the fragment's defining restriction is lifted,
locating the exact boundary.

The design has one central idea, which we state before the syntax so the
grammar reads as its consequence. Every datum in a pipeline carries two
independent pieces of temporal information: \emph{what} value it holds,
and \emph{when} that value becomes knowable---its \emph{availability}.
A pipeline is free of look-ahead exactly when no value that becomes
knowable after a decision epoch $t$ can influence the decision made at
$t$. The calculus therefore separates the two pieces at the level of
syntax: a datum's value may depend on other data freely, but in the
decidable fragment a datum's \emph{availability} may not depend on any
data value. This separation is what makes availability a statically
tractable quantity, and it is justified by how real pipelines are
built---availability is exogenous, fixed by the data source
(a publication date, a filing date, a bar close), ingested immediately
and unconditionally; a well-formed pipeline never re-times a datum's
availability on the basis of another datum's value. We make the
separation precise as a two-level grammar and then show that lifting it
is the sole source of undecidability.

\begin{figure}[t]
  \centering
  \includegraphics[width=\linewidth]{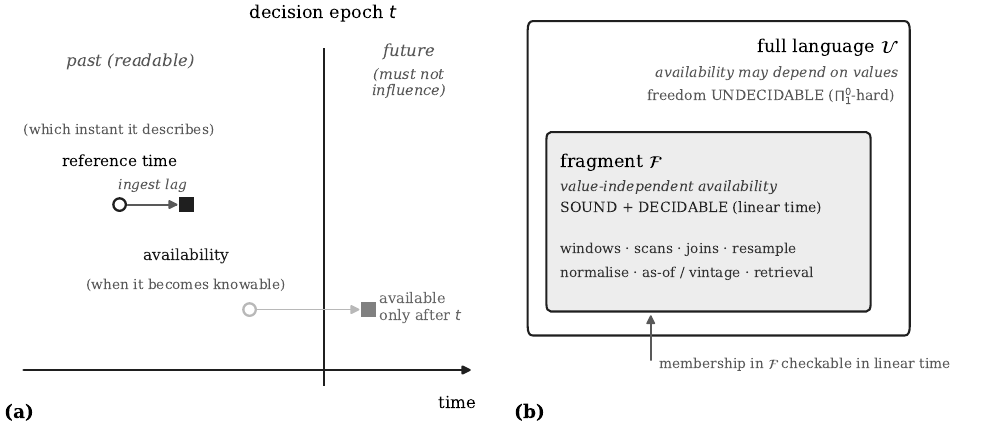}
  \caption{The two ideas the calculus is built on. (a) Every datum
  carries two independent temporal coordinates: its \emph{reference time}
  (which instant it describes) and its \emph{availability} (when it
  becomes knowable); the two differ by an ingest lag, and look-ahead
  occurs precisely when a datum whose availability follows a decision
  epoch $t$ influences the decision at $t$. (b) The language is stratified
  by whether availability may depend on data values. On the full language
  $\Full$, availability can be value-dependent and look-ahead-freedom is
  undecidable ($\Pi^0_1$-hard, Theorem~\ref{thm:undecidable}); on the
  value-independent fragment $\Frag$---which covers the constructs real
  pipelines use---the property is sound and decidable in linear time
  (Theorems~\ref{thm:soundness} and~\ref{thm:decidable-check}), and
  membership in $\Frag$ is itself checkable in linear time.}
  \label{fig:concept}
\end{figure}

\subsection{Sorts and Time-Indexed Values}
\label{sec:calc-sorts}

We work over a time domain $(\mathbb{T}, \le)$, a totally ordered set of
epochs. The calculus has three sorts,
\[
  \tau ::= \Time \;\mid\; \Val \;\mid\; \TVal,
\]
where $\Time$ ranges over $\mathbb{T}$, $\Val$ over ordinary (untimed)
data values, and $\TVal$ over \emph{time-indexed values} $\langle
v,\alpha\rangle$---a $\Val$ $v$ paired with its availability stamp
$\alpha\in\mathbb{T}$. The stamp $\alpha$ records when $v$ becomes
knowable, not when it is about; these can differ, and the distinction
between a datum's availability and its reference time is exactly what
look-ahead reasoning must track. A decision made at epoch $t$ may
legitimately consume a time-indexed value $\langle v,\alpha\rangle$ only
when $\alpha \le t$.

\subsection{The Availability Sublanguage}
\label{sec:calc-avail}

Availability terms denote elements of $\Time$ and are built only from
other availability information---never from values:
\[
\begin{array}{rcll}
  A &::=& t                         & \text{(the current decision epoch)}\\
    &\mid& c                        & \text{(a constant stamp, } c\in\mathbb{T})\\
    &\mid& \av(x)                   & \text{(availability of a bound time-indexed variable } x)\\
    &\mid& A + \delta               & \text{(constant lag/lead, } \delta\in\mathbb{Z})\\
    &\mid& \max(A_1,\dots,A_n)      & \text{(latest-of)}\\
    &\mid& \min(A_1,\dots,A_n)      & \text{(earliest-of, fixed arity } n).
\end{array}
\]
We write $\Av$ for the set of availability terms generated by this
grammar. The defining discipline is that the only way a datum enters an
availability term is through $\av(x)$, which extracts the \emph{stamp}
of a time-indexed variable, never its value $\val(x)$. Consequently
every term of $\Av$ is a function of stamps and constants alone. The
arity $n$ of $\max$ and $\min$ is fixed and syntactically apparent:
there is no folding over a data-dependent collection, which is what
prevents an availability term from smuggling in a data-dependent number
of inputs. This restriction is what the following lemma records, and it
is the linchpin of the entire development.

\begin{lemma}[Value-independence of availability]
\label{lem:value-indep}
For every availability term $A\in\Av$ with free time-indexed variables
$x_1,\dots,x_k$, the denotation of $A$ depends on $x_1,\dots,x_k$ only
through $\av(x_1),\allowbreak\ \dots,\allowbreak\ \av(x_k)$; it is invariant under any change to
the values $\val(x_i)$.
\end{lemma}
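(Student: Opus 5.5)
The proof is a structural induction on the derivation of $A\in\Av$. First I make the semantic setting explicit. An environment $\rho$ assigns to each free time-indexed variable $x_i$ a pair $\langle v_i,\alpha_i\rangle$, so that $\val(x_i)=v_i$ and $\av(x_i)=\alpha_i$. The decision epoch $t$ is also fixed. The denotation $\dom{A}_{\rho,t}\in\mathbb{T}$ is defined compositionally in the obvious way:
\begin{itemize}
\item $\dom{t}=t$ and $\dom{c}=c$;
\item $\dom{\av(x)}=\alpha$ where $\rho(x)=\langle v,\alpha\rangle$;
\item $\dom{A+\delta}=\dom{A}+\delta$;
\item $\max$ and $\min$ apply pointwise to the denotations of their $n$ arguments.
\end{itemize}
The claim to establish is then the following. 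If two environments $\rho,\rho'$ agree on the stamps of $x_1,\dots,x_k$, meaning $\rho(x_i)=\langle v_i,\alpha_i\rangle$ and $\rho'(x_i)=\langle v'_i,\alpha_i\rangle$ with the values $v_i,v'_i$ arbitrary, then $\dom{A}_{\rho,t}=\dom{A}_{\rho',t}$. Equivalently, $\dom{A}_{\rho,t}$ factors as $f_A(t,\alpha_1,\dots,\alpha_k)$ for some function $f_A$.

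The induction then runs as follows.
\begin{itemize}
\item The base cases $t$ and $c$ do not mention $\rho$ at all.
\item The base case $\av(x_i)$ returns $\alpha_i$ under both environments, by the definition of $\av$ as projection onto the stamp component. This is the one place where the grammar's exclusion of $\val(\cdot)$ is used: no production of $\Av$ ever projects the first component of a pair.
\item The case $A+\delta$ follows from the induction hypothesis for $A$, since $\delta$ is a syntactic constant in $\mathbb{Z}$.
\item The cases $\max(A_1,\dots,A_n)$ and $\min(A_1,\dots,A_n)$ apply the induction hypothesis to each of the finitely many immediate subterms. We then conclude by congruence, because the arity $n$ is fixed by the syntax and so does not depend on $\rho$.
\end{itemize}

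I expect the only real subtlety to be making sure the lemma is stated against the right notion of denotation. In particular, no production may hide a value dependence. The $\max$/$\min$ case is where the argument would fail if the arity were data-dependent, for example a fold over a collection whose length is computed from values. The paper rules this out by insisting on fixed, syntactically apparent $n$, so I would invoke that restriction explicitly in that step. I would also note that the interpretation of $+\delta$ on $\mathbb{T}$ must be a fixed function, such as a successor action, that does not consult $\rho$.

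With those two points made explicit, the induction is routine. The lemma follows by instantiating it with two environments that differ only in the values $\val(x_i)$.
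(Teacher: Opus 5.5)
Your proposal is correct and follows the paper's argument: a structural induction on $A$ in which the leaves $t$ and $c$ ignore the environment, the leaf $\av(x)$ reads only the stamp, and $+\delta$, $\max$, $\min$ are fixed functions of their subterms' denotations. Your explicit environment semantics, and your remarks on fixed arity and on $+\delta$ not consulting the environment, spell out what the paper's two-line proof leaves implicit.
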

\begin{proof}
Structural induction on $A$. No production of the grammar mentions
$\val(\cdot)$, and the only leaf that touches a variable is $\av(x)$,
which reads its stamp. Every other leaf is $t$ or a constant, and the
internal nodes $+\delta$, $\max$, $\min$ are functions of their
subterms' denotations. Hence the denotation factors through the stamps
alone.
\end{proof}

Lemma~\ref{lem:value-indep} holds \emph{by grammar}, and membership in
$\Av$ is therefore a syntactic, decidable predicate. This is what will
let the enforcement mechanism of Section~\ref{sec:typesystem} treat
every availability as a statically known quantity.

\subsection{Pipeline Terms}
\label{sec:calc-pipeline}

Pipeline terms denote time-indexed values or \emph{series} of them. For
the series-valued constructs we distinguish two orderings that the
scalar fragment does not need to separate: a \emph{position}---an
element's index within a series, ordered by $\preceq$---and its
availability time $\av\in\mathbb{T}$, ordered by $\le$. For a single
time-indexed value the two coincide operationally; the position domain
$(\Pos,\preceq)$ and the availability-monotonicity invariant that links
$\preceq$ to $\le$ are introduced with the series operators in
Section~\ref{sec:series-hardening}, where they are needed. The grammar
is stated once:
\[
\begin{array}{rcll}
  e &::=& x                                   & \text{(variable)}\\
    &\mid& \langle d, A\rangle                & \text{(base datum: value } d \text{ available at stamp } A)\\
    &\mid& \val(e)                            & \text{(project the value)}\\
    &\mid& f(e_1,\dots,e_n)                   & \text{(pure value op)}\\
    &\mid& \stampf(e, A)                      & \text{(attach/replace availability with } A)\\
    &\mid& \windowf_{k}(e)                    & \text{(last-}k\text{ window at each epoch)}\\
    &\mid& \scanf(f, e_0, e)                  & \text{(causal fold)}\\
    &\mid& \joinf(e_1,\dots,e_n)              & \text{(align sources at an epoch)}\\
    &\mid& \resamplef_{\rho}(e)               & \text{(down/upsample by rule } \rho)\\
    &\mid& \asoff(e_{\text{store}}, A)        & \text{(point-in-time / vintage read as of } A)\\
    &\mid& \retrievef(e_{\text{ctx}}, A)      & \text{(agentic context fetch as of } A)\\
    &\mid& \decidef_{t}(e)                    & \text{(emit a decision at epoch } t)\\
    &\mid& \letf\ x = e_1\ \inkw\ e_2.
\end{array}
\]

Two features of the grammar carry the availability discipline
explicitly. First, every construct that produces a fresh availability
stamp---$\stampf$, $\joinf$, $\resamplef$, $\asoff$, and
$\retrievef$---takes its new stamp from an availability term $A$. In the
fragment (Section~\ref{sec:calc-fragment}), $A\in\Av$, so the produced
stamp is value-independent. Their natural stamps are the expected ones:
$\joinf$ produces the $\max$ of its input stamps, and
$\asoff(\cdot,A)$ and $\retrievef(\cdot,A)$ produce stamp $A$, typically
$t$ or a source timestamp. Second, $\asoff$ and $\retrievef$ are the
point-in-time (vintage) and agentic-context constructs: their returned
\emph{value} may depend on the store's contents and on $A$, but their
\emph{stamp} is $A$, which in the fragment is value-independent. This is
precisely why vintages and restatements---the subtlest case in
practice---remain on the decidable side of the boundary. Reading a store \emph{as of} a reference time is a controlled release of data that is otherwise inaccessible at the decision epoch, i.e.\ a declassification with an explicit \emph{when} dimension~\cite{sabelfeld2009declassification}.

The one assumption we make about value operations isolates them from the
temporal analysis entirely.

\begin{assumption}[Value operations are total and pure]
\label{ass:pure}
Every value operation $f$, both in $f(e_1,\dots,e_n)$ and as the fold
function of $\scanf$, is a total, pure function of its $\Val$ arguments:
it always returns a value, and it reads no data other than those
arguments---in particular it performs no side-channel access to base
data, the store, or availability information. Consequently $f$ can
influence a result only through the values explicitly passed to it, and
its output is determined by those values alone.
\end{assumption}

Assumption~\ref{ass:pure} is what licenses treating a value operation as
\emph{opaque}: the analysis of Section~\ref{sec:typesystem} neither
needs nor is permitted to look inside $f$, because $f$ cannot introduce a
dependency that its arguments do not already carry. It is also the exact
hypothesis under which the relative-completeness result
(Proposition~\ref{prop:rel-complete}) is stated.

\subsection{The Fragment and the Full Language}
\label{sec:calc-fragment}

\begin{definition}[Value-independent fragment $\Frag$]
\label{def:fragment}
A pipeline term $e$ is in the \emph{fragment} $\Frag$ if and only if
every availability term occurring in it---in a base datum, $\stampf$,
$\joinf$, $\resamplef$, $\asoff$, or $\retrievef$---is drawn from the
sublanguage $\Av$ of Section~\ref{sec:calc-avail}; that is, no
availability is produced by a general pipeline term.
\end{definition}

\begin{definition}[Full language $\Full$]
\label{def:full}
The \emph{full language} $\Full$ replaces the availability argument of
the stamping constructs with an arbitrary pipeline term of sort $\Time$,
\[
  \stampf(e, e_A),\quad \asoff(e_{\text{store}}, e_A),\ \dots,
  \qquad e_A : \Time,
\]
so that an availability may be computed from data values---for example,
$e_A = \mathsf{if}\ \val(e_1) > c\ \mathsf{then}\ t-2\ \mathsf{else}\ t$,
modelling endogenous, value-conditioned availability. This case lies
outside the pipelines that motivate the work, but we include it so the
theory \emph{characterises} the boundary rather than assuming it away.
\end{definition}

\begin{proposition}[Fragment membership is decidable]
\label{prop:membership}
$e \in \Frag$ is decidable in time linear in the size of $e$.
\end{proposition}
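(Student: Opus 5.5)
We prove the proposition with a single bottom-up traversal of the syntax tree of $e$ that decides, for every availability-argument position, whether the subterm occupying it lies in $\Av$. Under Definition~\ref{def:fragment}, membership $e\in\Frag$ is the conjunction of local conditions, one for each availability slot: the second component of a base datum $\langle d,A\rangle$, and the availability argument of $\stampf$, $\joinf$, $\resamplef$, $\asoff$ and $\retrievef$. Each condition is ``this subterm is an $\Av$-term''. We therefore first decide $\Av$-membership for an arbitrary term, and then decide $\Frag$-membership.

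\textbf{Step 1: a linear recogniser for $\Av$.} We define a predicate $\mathrm{isAv}(a)$ by structural recursion on the grammar of Section~\ref{sec:calc-avail}.
\begin{itemize}
\item A leaf is accepted iff it is the epoch variable $t$, a constant $c\in\mathbb{T}$, or $\av(x)$ with $x$ a variable.
\item A node $a+\delta$ is accepted iff $\delta\in\mathbb{Z}$ is a literal and $\mathrm{isAv}(a)$ holds.
\item $\max(a_1,\dots,a_n)$ and $\min(a_1,\dots,a_n)$ are accepted iff all $\mathrm{isAv}(a_i)$ hold.
\item Every other constructor is rejected: $\val$, $f(\cdots)$, conditionals, and any pipeline construct in general.
\end{itemize}
Because the arity of $\max$ and $\min$ is syntactically explicit, each node is examined once and does $O(1)$ work beyond visiting its children. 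The running time is therefore linear in the size of the availability term.

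\textbf{Step 2: the traversal of $e$.} We recurse over $e$. At each pipeline constructor we call $\mathrm{isAv}$ on its availability slot, if it has one, and recurse as $\Frag$-membership on its pipeline-term children. This covers, for example, $e$ in $\stampf(e,A)$, $e_{\text{store}}$ in $\asoff$, $e_0$ and $e$ in $\scanf$, and both components of $\letf$. The result is the conjunction of all these calls.

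We must check that the availability slots and the pipeline children partition the subterms of $e$. Given that, every node of $e$ is visited exactly once, by exactly one of the two procedures, with $O(1)$ local work. One point needs explicit treatment. The parameter $\rho$ of $\resamplef_\rho$ and the window size $k$ are metadata rather than availability terms, so the traversal does not enter them. For $\joinf$, the fresh stamp is the implicit $\max$ of the input stamps, which is trivially in $\Av$. If a $\joinf$ carries an explicit stamp, we check it like any other slot.

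\textbf{Step 3: correctness and linearity.} Correctness follows by structural induction. The traversal accepts exactly when every availability-slot subterm satisfies $\mathrm{isAv}$, and $\mathrm{isAv}$ accepts exactly the terms generated by the $\Av$ grammar. This matches Definition~\ref{def:fragment} verbatim. Linearity follows by summing $O(1)$ over the nodes.

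The main obstacle I expect is purely one of representation. In $\Full$ (Definition~\ref{def:full}), an availability slot may hold an arbitrary pipeline term of sort $\Time$. The recogniser must reject such a term at its root, without first type-checking or evaluating it. This is possible because $\mathrm{isAv}$ is purely syntactic and rejects any non-$\Av$ constructor immediately, so no cost beyond a single pass is incurred. I would also state explicitly that constants and $\delta$ are atomic, each of unit size, so that arithmetic on them does not affect the linear bound.
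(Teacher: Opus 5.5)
Your proposal is correct and takes the same approach as the paper: a single syntactic pass over $e$ that, at each construct carrying an availability argument, checks that the argument parses in the grammar $\Av$. You spell out what the paper leaves implicit: the recursive recogniser for $\Av$, the fact that availability slots and pipeline children are disjoint (which makes the total cost linear), and how $\joinf$ and $\resamplef_\rho$ are handled, since neither carries an explicit stamp argument.
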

\begin{proof}
Traverse the term once; at each construct carrying an availability
argument, check that the argument parses in the grammar $\Av$, which
forbids $\val$ and data-dependent arity. This is a single syntactic
pass.
\end{proof}

Proposition~\ref{prop:membership} makes the eventual checker honest
about its own scope: it can always tell, cheaply, whether a submitted
pipeline lies in the region where soundness is guaranteed ($\Frag$) or
in the general region where, by Section~\ref{sec:decidability}, no sound
and complete checker can exist, and where it must therefore
conservatively reject or warn.

We defer the operational semantics---the execution that advances $t$
through $\mathbb{T}$---and the formal definition of look-ahead-freedom as
temporal non-interference to Section~\ref{sec:semantics}. Informally,
and sufficiently for reading the type system: an execution is
\emph{look-ahead-free} if, for every decision emitted at epoch $t$,
perturbing any base datum whose availability exceeds $t$ leaves that
decision unchanged. This is the two-run, non-interference form of the
property, and it is the statement Theorem~\ref{thm:soundness} certifies.

\section{A Sound, Decidable Discipline for the Fragment}
\label{sec:typesystem}

We give a type-and-effect system $\vdash_{\Frag}$ for terms of the
fragment $\Frag$ and prove two results about it.

\begin{theorem}[Soundness]
\label{thm:soundness}
If $\;\vdash_{\Frag} e$, then $e$ is look-ahead-free
(Definition~\ref{def:laf}).
\end{theorem}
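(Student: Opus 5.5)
We prove the theorem by a two-run logical relation indexed by the decision epoch. Fix an epoch $t$ and two base environments (stores, contexts, base data) $\sigma_1,\sigma_2$ that are \emph{$t$-equivalent}: they agree on every base datum whose availability is $\le t$ and may differ arbitrarily on the rest. Since the availability terms of base data lie in $\Av$, Lemma~\ref{lem:value-indep} shows that each stamp denotes the same epoch in both runs. Hence the partition into ``past'' and ``future'' data is itself the same in both runs; this is what makes the relation well defined. Define $\sim_t$ on semantic values by induction on types. Two time-indexed values $\langle v_1,\alpha_1\rangle \sim_t \langle v_2,\alpha_2\rangle$ are related iff $\alpha_1=\alpha_2$ and, whenever the type's effect bound $\eff$ (the static upper bound on the availability of everything the value depends on) is $\le t$, also $v_1=v_2$. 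Series are related pointwise on positions, and values are related componentwise. The judgement $\Gamma\vdash_{\Frag} e : \tau \,!\, \eff$ should mean that the effect $\eff$ is an availability term in $\Av$ over-approximating the latest availability of any base datum the value of $e$ may depend on.

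The fundamental lemma is: if $\Gamma\vdash_{\Frag} e:\tau\,!\,\eff$ and two environments are related by $\sim_t$ on $\Gamma$, then the two evaluations of $e$ are related by $\sim_t$ at $\tau$. We prove it by induction on the typing derivation, one case per rule, in the following order.
\begin{itemize}
\item Variables and $\letf$ are immediate from the hypothesis on environments.
\item A base datum has effect $A$, and its value agrees whenever $A\le t$ by $t$-equivalence.
\item For $f(e_1,\dots,e_n)$, the effect is $\max$ of the argument effects; Assumption~\ref{ass:pure} gives equal outputs from equal inputs. This case is where opacity costs precision but never soundness.
\item For $\val(e)$, the effect is inherited.
\item For $\windowf_k$, $\scanf$ and $\joinf$, we induct over positions, using the availability-monotonicity invariant of Section~\ref{sec:series-hardening}: the window at epoch $t$ touches only elements with stamp $\le t$, and the fold state at position $p$ depends only on positions $\preceq p$.
\item For $\asoff$ and $\retrievef$, the typing rule demands $A\le t$ for the stamp, and the store read is restricted to vintages available by $A$, so the returned value agrees.
\end{itemize}
Finally, $\decidef_t(e)$ is typable only when $\eff\le t$ is derivable. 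The lemma then gives equal decisions, which is exactly the two-run condition of Definition~\ref{def:laf}.

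The main obstacle is $\stampf(e,A)$, the re-stamping channel. Its \emph{stamp} is benign, since it is $A$ and value-independent. The danger is that the \emph{value} of $e$ may carry future dependence, which the new, earlier stamp then hides. The invariant that rescues the argument is that the effect is never lowered: the rule must give $\stampf(e,A)$ the effect $\max(\eff_e, A)$ rather than $A$, or equivalently require $\eff_e\le A$ as a side condition. We therefore keep the relation keyed on $\eff$, not on the runtime stamp $\alpha$. I would first try stating $\sim_t$ with the value-agreement guard on $\eff$ and check that every rule preserves $\eff\ge$ the true dependency frontier. The delicate part is that $\eff$ is a symbolic term in $t$ and $\av(x)$. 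Comparisons like $\eff\le t$ must therefore be decided in a way that is valid for every valuation of stamps. I would handle this by interpreting the entailment check conservatively (sound for all valuations, which is decidable for $\max/\min/{+}\delta$ terms) and invoking Lemma~\ref{lem:value-indep} so that the same valuation applies in both runs.
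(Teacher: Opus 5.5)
Your architecture is the paper's own. It is a two-run relation indexed by $t$, with the value clause guarded by the static effect rather than the runtime stamp. Stamp agreement comes from value-independence of $\Av$ (the paper's Lemma~\ref{lem:avail-det-final}), and a Fundamental Lemma is proved by induction on typing. The re-stamp case is handled by never lowering the effect, and the conclusion is read off at \rulename{T-Decide}.

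The gap is in the series cases. You invoke the availability-monotonicity invariant but never derive it from $\vdash_{\Frag} e$, and you omit $\resamplef$ entirely. Acceptance (Definition~\ref{def:acceptance}) is more than typability. Clause~(3) declares base series monotone. Clause~(4) requires each $\resamplef_\rho$ to be causal ($\max\rho^{\leftarrow}(p)\preceq\iota(p)$) and each $\joinf$ to align at a common position.

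The paper uses these clauses in two places. Lemma~\ref{lem:mono-preserve} needs causal $\rho$ for its resample item. Item~2 of Lemma~\ref{lem:causality} says the runtime stamp at $p$ is the maximum availability over the reachable base elements and equals the typed effect. That item is what turns ``reads only positions $\preceq p$'' into ``reads only data of availability $\le\dom{\eff}\le t$, hence pinned by $\approx_t$''. Your claim that the window ``touches only elements with stamp $\le t$'' is not a property of the positional window semantics; it holds only as this consequence.

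These clauses are load-bearing. Non-causal resamples are excluded by clause~(4), not by typing. Such a node aggregates source positions after its emission point. Its effect, which the Causality Lemma matches to the availability at $\iota(p)$, therefore meets your guard $\eff\le t$ while its value depends on later-available data, and the two runs differ. Placed under a window, the same node also breaks the monotonicity your window and scan cases rely on. An argument that never uses clauses~(3)--(4) therefore cannot be complete.

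Three smaller points. First, \rulename{T-Asof} does not demand $A\le t$; that obligation arises only at \rulename{T-Decide}. Your guarded relation already makes the case $\dom{A}>t$ vacuous, so argue the case under the guard instead.

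Second, the paper's \rulename{T-Stamp} keeps the effect $\eff$ unchanged. Your $\max(\eff_e,A)$ is sound but unnecessary, since $A$ carries no value dependency. The side condition $\eff_e\le A$ is not equivalent: it rejects legitimate restamps, so you would be proving soundness of a smaller system than $\vdash_{\Frag}$. Your stamp argument works verbatim with effect $\eff$, which is the rule you must treat.

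Third, keep the seed effect $\eff_0$ in the scan case, since \rulename{T-Scan} assigns $\max(\eff_0,\eff)$.
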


\begin{theorem}[Decidability]
\label{thm:decidable-check}
$\;\vdash_{\Frag} e$ is decidable in time linear in $|e|$.
\end{theorem}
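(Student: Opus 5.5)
The statement to prove is Theorem~\ref{thm:decidable-check}, the linear-time decidability of $\vdash_{\Frag}$. The plan is to show that the type-and-effect system is \emph{syntax-directed}: exactly one rule applies to each construct of the pipeline grammar. Then checking $\vdash_{\Frag} e$ is a single bottom-up traversal of the abstract syntax tree of $e$, which synthesises a type and an effect for each node. The effect is an availability bound in $\Av$ (or a normalised representation of one). Before the traversal, we run the linear membership test of Proposition~\ref{prop:membership} and reject any $e\notin\Frag$. After that, every availability argument is known to lie in $\Av$, and by Lemma~\ref{lem:value-indep} each effect is a term over stamps, $t$, and constants only. No value ever needs to be evaluated, which is why a purely syntactic pass suffices.

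The key steps, in order, are as follows.
\begin{enumerate}
\item Confirm that the rules have no overlapping conclusions, or fix a canonical rule for each construct. Confirm also that every premise mentions only immediate subterms, plus side conditions on their synthesised effects. This gives determinism and makes the algorithm terminate by structural recursion.
\item Bound the cost of each local step. At each node the checker does a constant amount of work plus the following:
\begin{itemize}
\item it combines effects, for instance $\max$ of the children's effects for $\joinf$, replacement by $A$ for $\stampf$, $\asoff$ and $\retrievef$, and shifts by $\delta$ for windows;
\item it discharges a side condition of the form ``effect $\le t$'', or the $O(1)$ causal-alignment obligation for $\resamplef$ and $\joinf$ mentioned in the contributions.
\end{itemize}
\item Use $\letf$ bindings through an environment mapping $x$ to its effect. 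Lookup is $O(1)$ once variables are resolved to de~Bruijn indices or pointers in a preprocessing pass, which is itself linear.
\item Sum the per-node costs over the $|e|$ nodes to obtain $O(|e|)$.
\end{enumerate}

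The main obstacle is the size of the synthesised effects. A naive implementation would build effect terms syntactically. Nested $\max$, $\min$ and $+\delta$ could then grow, and through $\letf$ sharing the symbolic terms could even blow up. Deciding a comparison such as $A\le t$ between symbolic terms would then cost more than $O(1)$.

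My first attempt is to normalise each effect eagerly into a canonical form $\max_i \min_j (s_{ij}+\delta_{ij})$, where each $s_{ij}$ is either $t$ or a constant. Every $\av(x)$ has already been replaced by the normalised effect of its binder. Under the theorem's hypothesis of bounded availability-term complexity, the arities and nesting depth are bounded by a constant $K$, so each normal form has size $O(1)$ in $|e|$.

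Each side condition $N\le t$ is a finite conjunction or disjunction over atoms. Each atom compares $t+\delta$ against $t$, which reduces to the sign of $\delta$, or compares a constant against $t$. It is therefore checkable in $O(1)$. Checking it uniformly in $t$ holds symbolically because $\mathbb{T}$ is totally ordered. Constants compared against a free epoch $t$ are treated conservatively: the check rejects unless the constraint is expressed relative to $t$, which is consistent with what the soundness rules demand.

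If the complexity bound did not hold, the same algorithm would still be polynomial, and I would state that remark explicitly. Finally, I would note that the linear bound counts arithmetic on stamps and lags as unit cost.
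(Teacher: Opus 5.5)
Your architecture is the paper's. You run the linear membership pass of Proposition~\ref{prop:membership}, then a single syntax-directed bottom-up pass that synthesises effects and discharges the \rulename{T-Decide} premise and the local operator obligations. However, the per-node effect computation you describe is not the one that defines $\vdash_{\Frag}$, and as written your algorithm decides a different judgement, one that is unsound for $\stampf$.
\begin{itemize}
\item \rulename{T-Stamp} does \emph{not} replace the effect by $A$. It keeps the subject's effect $\eff$, and that is the crux of the whole system. Replacing it by $A$ is precisely the stamp-only check that accepts re-stamp leaks.
\item \rulename{T-Window} likewise returns $\eff$ unchanged, with no $\delta$-shift. Only \rulename{T-Asof} and $\retrievef$ replace the effect by $A$.
\item You omit clause~(3) of Definition~\ref{def:acceptance}, the monotone-leaf check.
\item Rejecting every constant-versus-$t$ atom ``conservatively'' decides something stronger than $\sqsubseteq$ whenever $\mathbb{T}$ has a least element. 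A decision procedure must decide $\eff\sqsubseteq t$ exactly, which is what Lemma~\ref{lem:effect-order} is invoked for.
\end{itemize}
None of these points changes the cost accounting.

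The substantive gap is your claim that bounded availability-term complexity makes every synthesised effect $O(1)$. A bound on the availability arguments \emph{written in} $e$ does not bound the effects. \rulename{T-Op}, \rulename{T-Join} and \rulename{T-Scan} take a $\max$ over all operands, so a node's effect ranges over every base datum beneath it. Without $\min$, your eager substitution does collapse every effect to the form $\max(t+\delta,c)$, but with $\min$ it cannot. Take $n$ base data with stamps $\min(t-20k,\,10+20k)$ for $k<n$; each has arity~$2$ and depth~$2$. Combine them through a chain of binary value operations. The root effect is $\max_{k<n}\min(t-20k,\,10+20k)$, which on $\mathbb{T}=\mathbb{Z}$ is a staircase with $n$ flat steps at distinct heights. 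Every leaf denotes either a constant or a shift of $t$, so each step needs its own constant leaf or many distinct shifts. Hence no normal form of this effect has $o(n)$ atoms. The intermediate nodes carry $n-1,n-2,\dots$ steps, so eager normalisation at every node costs $\Theta(|e|^2)$.

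To get linearity you must either place the complexity bound on the synthesised effects themselves, or avoid materialising them. For instance, on $\mathbb{Z}$, after your substitution every effect has $\dom{\eff}_{t}-t$ non-increasing in $t$. So $\eff\sqsubseteq t$ holds iff the limit of that difference as $t\to-\infty$ is $\le 0$. That limit is computable compositionally in $O(1)$ per node: $t$ gives $0$, a constant gives $+\infty$, $+\delta$ shifts it, and $\max$ and $\min$ act on it directly.

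The paper's own proof is no more careful here. Its remark that ``effect normal forms have size bounded by the term'', summed over the nodes of a bottom-up pass, yields only $O(|e|^2)$. This is exactly the step where a complete argument has to do real work.
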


Together with the undecidability of the property on the full language
$\Full$ (Theorem~\ref{thm:undecidable}, Section~\ref{sec:decidability}),
these results locate the exact boundary: look-ahead-freedom is
undecidable in general, but admits a sound, linear-time static check on
the value-independent fragment that covers real point-in-time pipelines.

\subsection{Effects as Static Availability Bounds}
\label{sec:effects}

The \emph{effect} of a term is a conservative upper bound, expressed as
an availability term, on the availability of every base datum whose
\emph{value} can flow into the term's result. This is a type-and-effect
discipline in the sense of Lucassen and
Gifford~\cite{lucassen1988polymorphic,gifford1986integrating} (see~\cite{nielson1999principles} for the general theory): the type
records what a term computes, and the effect records a statically
computable, conservative over-approximation of a runtime quantity---here,
the temporal availability of the data the result depends on. Effects live in the
availability sublanguage $\Av$, so they are themselves value-independent
and statically evaluable.

\begin{definition}[Effect ordering]
\label{def:effect-order}
For availability terms $\eff,\psi \in \Av$, write $\eff \sqsubseteq
\psi$ if and only if $\dom{\eff}_{t,I} \le \dom{\psi}_{t,I}$ for all
epochs $t$ and all stamp valuations $I$. Because terms of $\Av$ are
monotone functions of $\{t,\text{ stamps},\text{ constants}\}$ built
from $\{+\delta,\max,\min\}$, the relation $\sqsubseteq$ is a partial
order with $\max$ as join, and it is decidable
(Lemma~\ref{lem:effect-order}).
\end{definition}

\begin{lemma}[Decidability of effect ordering]
\label{lem:effect-order}
$\eff \sqsubseteq \psi$ is decidable for $\eff,\psi\in\Av$.
\end{lemma}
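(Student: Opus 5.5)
The plan is to reduce $\eff \sqsubseteq \psi$ to a validity question in a decidable arithmetic theory. Throughout, we take the time domain to be $\mathbb{T}=\mathbb{Z}$ (discrete epochs), which is the reading under which the production $A+\delta$ with $\delta\in\mathbb{Z}$ is meaningful.

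\emph{Step 1 (translation).} Let $x_1,\dots,x_k$ be the variables occurring in $\eff$ or $\psi$. Introduce integer unknowns $s_0$ for the epoch $t$ and $s_i$ for $\av(x_i)$. By Lemma~\ref{lem:value-indep} the denotation $\dom{\eff}_{t,I}$ is a function of $s_0,\dots,s_k$ only, and it is built from unknowns and integer constants using $+\delta$, $\max$ and $\min$. The condition $\eff\sqsubseteq\psi$ is therefore exactly the closed sentence
\[
\forall s_0,\dots,s_k\in\mathbb{Z}.\ \dom{\eff}(\vec s)\le\dom{\psi}(\vec s).
\]
Since $\max$ and $\min$ are definable by case splits, this is a sentence of Presburger arithmetic, and decidability already follows. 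However, we want a procedure whose cost is governed by the bounded term complexity used in Theorem~\ref{thm:decidable-check}, so we give a direct one.

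\emph{Step 2 (normal form).} Push every $+\delta$ to the leaves using
\[
\max(A_1,\dots,A_n)+\delta=\max(A_1+\delta,\dots,A_n+\delta),
\]
the analogous identity for $\min$, and $(A+\delta)+\delta'=A+(\delta+\delta')$. Every leaf then has the form $u+c$, where $u$ is an unknown $s_i$ or the distinguished atom $0$ that absorbs the constant stamps.

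Next distribute using the lattice laws of the total order $\le$. We put $\eff$ into max-of-min form $\max_i\min_j(u_{ij}+c_{ij})$ and $\psi$ into min-of-max form $\min_m\max_l(w_{ml}+d_{ml})$. Because $\max$ is the join and $\min$ the meet, $\eff\sqsubseteq\psi$ holds iff, for every pair $(i,m)$,
\[
\min_j(u_{ij}+c_{ij})\le\max_l(w_{ml}+d_{ml})\quad\text{for all }\vec s.
\]

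\emph{Step 3 (deciding one pair).} A pair $(i,m)$ fails iff the following conjunction is satisfiable over $\mathbb{Z}$:
\[
\bigwedge_{j,l}\; u_{ij}+c_{ij}>w_{ml}+d_{ml}.
\]
This is a conjunction of difference constraints of the form $u-w\ge d-c+1$, with the atom $0$ acting as a fixed origin. Satisfiability is decided in polynomial time by building the constraint graph and checking for a negative cycle with Bellman--Ford; integrality comes for free because all bounds are integers. We answer ``$\eff\sqsubseteq\psi$'' iff every such system is infeasible.

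This procedure also yields a simple syntactic certificate for the common case. When no atom is shared across different variables, validity reduces to finding some $j,l$ with $u_{ij}=w_{ml}$ and $c_{ij}\le d_{ml}$.

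\emph{Expected obstacle.} Correctness and termination are routine. The real cost is the exponential blow-up of distribution in Step~2. For decidability alone this is harmless, since the normal form is finite. To keep the bound compatible with Theorem~\ref{thm:decidable-check}, we would note that under the bounded availability-term complexity assumed there, the size of the normal form is bounded by a constant, so each comparison costs $O(1)$. We would also check that $\sqsubseteq$ is antisymmetric only up to semantic equality of terms: it is a preorder on syntax and a partial order on denotations.
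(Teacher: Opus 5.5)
Your proof is correct. Its skeleton (push every $+\delta$ to the leaves, distribute to a lattice normal form, and exploit the fact that every atom carries a unit coefficient) is the paper's, but you finish the comparison differently and more carefully.

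The paper normalises both $\eff$ and $\psi$ to max-of-min form and asserts that the comparison is then ``decided by the constant offsets together with the $\max$/$\min$ tree structure''. You instead put $\psi$ into the dual min-of-max form. That is what actually splits $\eff\sqsubseteq\psi$ into independent clause-pair checks, and you decide each pair as infeasibility of a system of difference constraints.

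You also add the Presburger translation, which settles decidability at once but gives no cost bound. You make explicit the choice $\mathbb{T}=\mathbb{Z}$, which the paper leaves implicit. Your remark that distribution can blow up exponentially, so that each comparison is $O(1)$ only under bounded availability-term complexity, matches the qualifier the paper attaches to Theorem~\ref{thm:decidable-check}.

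One sharpening: Bellman--Ford is more than you need, and your syntactic certificate holds without the side condition you attach. In the constraint graph for a pair $(i,m)$, every edge runs from an atom $a$ of the min-clause to an atom $b$ of the max-clause. Its weight is $C(a)-D(b)-1$, where $C(a)$ is the least offset of $a$ in the min-clause and $D(b)$ the greatest offset of $b$ in the max-clause. Around any cycle these weights telescope to $\sum_a\bigl(C(a)-D(a)-1\bigr)$ over the atoms visited. So a negative cycle exists iff some atom occurring on both sides has $C(a)\le D(a)$, and that atom already gives a negative self-loop. Hence the pair is valid iff some $j,l$ satisfy $u_{ij}=w_{ml}$ and $c_{ij}\le d_{ml}$, a direct syntactic check on the two clauses. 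This is exactly the content of the paper's one-line appeal to unit coefficients, made precise.
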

\begin{proof}
Normalise each of $\eff,\psi$ to the form
$\max_i\big(\min_j(t + c_{ij}\ \text{or}\ \av(x_k)+c_{ij})\big)$ by
distributing $+\delta$ over $\max$ and $\min$ (all shifts are constant)
and flattening; arities are fixed, so normalisation terminates.
Comparing two such normal forms reduces to a finite set of
constant-and-stamp inequalities that must hold for all $t$ and all stamp
valuations. Each variable ($t$ and each $\av(x)$) occurs with unit
coefficient, so the comparison is decided by the constant offsets
together with the $\max$/$\min$ tree structure.
\end{proof}

\subsection{The Typing Judgement}
\label{sec:typing}

The judgement has the form $\Gamma \vdash e : \tau \,!\, \eff$, read:
under context $\Gamma$, the term $e$ has sort $\tau$ and effect
$\eff\in\Av$---every base-datum value influencing $e$'s result is
available no later than $\eff$. The context $\Gamma$ maps each variable
to its sort and its stamp term (an element of $\Av$). We give the rules
that carry temporal content; the omitted congruence rules are standard
and simply join effects with $\max$.

The base and projection rules record where a value comes from:
\[
\frac{}{\Gamma \vdash \langle d, A\rangle : \TVal\,!\, A}
\quad(\rulename{T-Base})
\qquad\qquad
\frac{\Gamma \vdash e : \TVal\,!\,\eff}
     {\Gamma \vdash \val(e) : \Val\,!\,\eff}
\quad(\rulename{T-Val})
\]
A base datum's value is available exactly at its stamp $A$; that is its
effect. Projecting a value inherits the effect, since reading the
content exposes whatever availability produced it. Pure value
operations combine effects by $\max$:
\[
\frac{\Gamma \vdash e_i : \Val\,!\,\eff_i \quad (\forall i)}
     {\Gamma \vdash f(e_1,\dots,e_n) : \Val\,!\,\textstyle\max_i \eff_i}
\quad(\rulename{T-Op})
\]
the result can depend on the latest-available of its operands.

The next rule is the crux of the system. Re-stamping a datum changes its
advertised stamp but must not change its effect:
\[
\frac{\Gamma \vdash e : \Val\,!\,\eff \qquad A \in \Av}
     {\Gamma \vdash \stampf(e, A) : \TVal\,!\,\eff}
\quad(\rulename{T-Stamp})
\]
The value produced by $\stampf(e,A)$ still depends on data available at
$\eff$, even though it now advertises the stamp $A$. This is exactly
where a stamp-only check fails and the effect system succeeds:
$\stampf(e,A)$ may present an admissible stamp $A\le t$ while its effect
$\eff \not\sqsubseteq t$ betrays a future dependency. A check that reads
only the advertised stamp would accept such a term; the effect discipline
does not.

The series and join constructs combine their premises' effects:
\[
\frac{\Gamma \vdash e_i : \TVal\,!\,\eff_i \quad(\forall i)}
     {\Gamma \vdash \joinf(e_1,\dots,e_n) : \TVal\,!\,\textstyle\max_i \eff_i}
\quad(\rulename{T-Join})
\qquad
\frac{\Gamma \vdash e : \TVal\,!\,\eff}
     {\Gamma \vdash \windowf_k(e) : \TVal\,!\,\eff}
\quad(\rulename{T-Window})
\]
A causal last-$k$ window over an availability-monotone series has effect
equal to the newest element's, namely $\eff$: the window never reaches
past the current position, as its causal semantics
(Section~\ref{sec:series-hardening}) guarantee. The fold rule is
\[
\frac{\Gamma \vdash e_0 : \TVal\,!\,\eff_0 \qquad
      \Gamma \vdash e : \TVal\,!\,\eff}
     {\Gamma \vdash \scanf(f,e_0,e) : \TVal\,!\, \max(\eff_0,\eff)}
\quad(\rulename{T-Scan})
\]
A causal fold's running result at position $p$ depends on the seed $e_0$
and on inputs up to $p$. By monotonicity
(Lemma~\ref{lem:mono-preserve}), the current element's effect $\eff$
dominates all earlier inputs, so the only term the current input does
not necessarily dominate is the seed effect $\eff_0$; the sound bound is
therefore $\max(\eff_0,\eff)$, and the purity of $f$
(Assumption~\ref{ass:pure}) guarantees the accumulator introduces no
availability beyond these. When the seed is admissible at the series
start, the bound simplifies to $\eff$ at every non-initial position; we
keep the explicit $\max(\eff_0,\eff)$ so that soundness requires no
assumption on the seed.

The vintage rule is the second decisive one:
\[
\frac{\Gamma \vdash e_{\text{store}} : \TVal\,!\,\eff_s \qquad A\in \Av}
     {\Gamma \vdash \asoff(e_{\text{store}}, A) : \TVal\,!\, A}
\quad(\rulename{T-Asof})
\]
A point-in-time read returns the store content as known at $A$; its
effect is $A$, \emph{not} $\eff_s$, because the as-of semantics
(Section~\ref{sec:semantics}) guarantee that only data with availability
$\le A$ is visible to the read. This is the formal reason vintages and
restatements stay decidable: the value depends on the store, but only on
its $\le A$ portion, and $A\in\Av$ is value-independent. The agentic
retrieval construct $\retrievef$ types identically.

Finally, the decision rule carries the single obligation that enforces
the property, and $\letf$ substitutes a binding's effect into its body:
\[
\frac{\Gamma \vdash e : \tau\,!\,\eff \qquad \eff \sqsubseteq t}
     {\Gamma \vdash \decidef_t(e) : \tau\,!\, t}
\quad(\rulename{T-Decide})
\qquad
\frac{\Gamma \vdash e_1 : \tau_1\,!\,\eff_1 \quad
      \Gamma, x{:}\tau_1 \vdash e_2 : \tau_2\,!\,\eff_2}
     {\Gamma \vdash \letf\ x = e_1\ \inkw\ e_2 : \tau_2 \,!\, \eff_2[\eff_1/\av(x)]}
\quad(\rulename{T-Let})
\]
A decision at epoch $t$ is well-typed only if the effect of its body is
bounded by $t$: every value that can flow into it is available by $t$.
This single premise $\eff \sqsubseteq t$---decidable by
Lemma~\ref{lem:effect-order}---is the static enforcement of
look-ahead-freedom. The $\letf$ rule substitutes the bound variable's
effect for $\av(x)$ in the body's effect; because effects lie in $\Av$
and $\Av$ is closed under this substitution, the result stays in $\Av$.

\subsection{Acceptance}
\label{sec:acceptance}

The typing rules carry the temporal content, but soundness for the full
calculus depends on two further well-formedness conditions on the series
operators, surfaced by the hardening of
Section~\ref{sec:series-hardening}. We fold all requirements into a
single acceptance predicate so that the soundness theorem stands with no
external side-conditions.

\begin{definition}[Acceptance]
\label{def:acceptance}
$\vdash_{\Frag} e$ holds if and only if all of the following hold:
\begin{enumerate}
  \item \textbf{(Fragment)} $e\in\Frag$ (Definition~\ref{def:fragment}):
        every availability term in $e$ is drawn from the
        value-independent sublanguage $\Av$;
  \item \textbf{(Typing)} $\varnothing \vdash e : \tau\,!\,\eff$ is
        derivable for some $\tau,\eff$, with every $\decidef_t$ subterm
        discharging its \rulename{T-Decide} premise $\eff\sqsubseteq t$;
  \item \textbf{(Monotone leaves)} every base series in $e$ is declared
        availability-monotone (Definition~\ref{def:monotone}); this is a
        data-source property asserted at ingestion and trusted per the
        stated boundary (incorrect source stamps are out of scope);
  \item \textbf{(Causal alignment)} every $\resamplef_\rho$ in $e$ has a
        causal rule ($\max\rho^{\leftarrow}(p)\preceq\iota(p)$,
        Section~\ref{sec:series-hardening}), and every $\joinf$ aligns
        its sources at a common position rather than across a look-ahead
        offset.
\end{enumerate}
Conditions (3)--(4) are local: each is $O(1)$ per operator, and, by
Lemma~\ref{lem:mono-preserve}, together they establish the global
availability-monotonicity invariant on which Lemma~\ref{lem:causality},
and hence soundness, depends.
\end{definition}

With acceptance so defined, everything the soundness proof requires is a
clause of Definition~\ref{def:acceptance}. We can already discharge
decidability.

\begin{proof}[Proof of Theorem~\ref{thm:decidable-check}]
Type inference is syntax-directed: each rule computes its conclusion's
effect as a $\max$ or substitution of its premises' effects, all in
$\Av$; the only typing side-condition is $\eff\sqsubseteq t$ at
\rulename{T-Decide}, decidable by Lemma~\ref{lem:effect-order}. The
fragment check (clause~1) is the linear syntactic pass of
Proposition~\ref{prop:membership}; the monotone-leaf and causal-alignment
checks (clauses~3--4) are $O(1)$ per operator by inspection of the
declared stamp and alignment arguments. A single bottom-up pass over $e$
therefore computes all effects and discharges all obligations in time
linear in $|e|$; effect normal forms have size bounded by the term, and
arities are fixed.
\end{proof}

\subsection{Soundness via a Two-Run Logical Relation}
\label{sec:soundness}

Soundness is a \emph{relational} property: it constrains two executions
that agree up to the decision epoch. A single-run preservation argument
cannot establish it. We define a family of relations indexed by the
decision epoch $t$ and prove a Fundamental Lemma from which
Theorem~\ref{thm:soundness} follows as the instance at the decision
sort. Throughout, $I \approx_t I'$ denotes that two input assignments
agree on every base datum whose availability is $\le t$
(Definition~\ref{def:agree}).

\begin{definition}[Indistinguishability at $t$]
\label{def:logrel}
Fix an epoch $t$ and inputs $I \approx_t I'$. For results carrying
effect $\eff$, define:
\begin{itemize}
  \item \emph{Values.} $v \sim^{\eff}_{t,I,I'} v'$ iff
    $\dom{\eff}_{t,I}=\dom{\eff}_{t,I'} \le t \Rightarrow v = v'$: when
    the effect certifies that the result is drawn from the admissible
    past, the two runs must agree exactly; when the effect exceeds $t$,
    no agreement is demanded, and by \rulename{T-Decide} such a value
    can never reach a decision.
  \item \emph{Time-indexed values.} $\langle v,\alpha\rangle
    \sim^{\eff}_{t,I,I'} \langle v',\alpha'\rangle$ iff
    $\alpha=\alpha'$ and $v \sim^{\eff}_{t,I,I'} v'$: stamps always
    agree, by the determinacy lemma below; values agree when admissible.
\end{itemize}
\end{definition}

\begin{lemma}[Availability determinacy]
\label{lem:avail-det-final}
If $\Gamma \vdash e : \TVal\,!\,\eff$ in $\Frag$ and $I\approx_t I'$,
and $I\vdash_t e\Downarrow\langle v,\alpha\rangle$ and
$I'\vdash_t e\Downarrow\langle v',\alpha'\rangle$, then $\alpha=\alpha'$.
\end{lemma}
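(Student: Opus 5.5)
We prove Lemma~\ref{lem:avail-det-final} by structural induction on the typing derivation of $e$, or equivalently on $e$, since typing is syntax-directed. We strengthen the statement so that it also covers open terms. Relative to a pair of environments binding each free variable $x$ to time-indexed values with equal stamps in the two runs, the stamp computed by $e$ is the same under $I$ and $I'$. The strengthening is what makes $\letf$ go through.

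The engine of the proof is Lemma~\ref{lem:value-indep}. In $\Frag$, every stamp-producing construct takes its stamp from an availability term $A\in\Av$. The denotation $\dom{A}_{t,\cdot}$ depends only on $t$, on constants, and on the stamps $\av(x_i)$ of free variables, never on values. By the induction hypothesis those stamps agree across the two runs, so $\dom{A}$ agrees.

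The cases go as follows.
\begin{itemize}
\item \rulename{T-Base}: the stamp is $\dom{A}$, which agrees by Lemma~\ref{lem:value-indep}.
\item \rulename{T-Stamp}: the stamp is $\dom{A}$ regardless of the value of $e$. This is the key point, because the value may differ across runs while the stamp cannot.
\item Variables: the stamps agree by the environment hypothesis.
\item $\joinf$: the stamp is the $\max$ of the input stamps, each of which agrees by the induction hypothesis.
\item $\windowf_k$ and $\scanf$: these inherit their stamp from the current position of the input series. That stamp agrees by the induction hypothesis applied pointwise to series elements, so series must be handled with the hypothesis stated elementwise.
\item $\resamplef_\rho$: the rule $\rho$ and its alignment are fixed syntactically, so the stamp is a value-independent function of input stamps.
\item $\asoff$ and $\retrievef$: the stamp is $\dom{A}$. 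The store's content may differ across runs, but it affects only the value.
\item $\letf$: we evaluate $e_1$ and obtain equal stamps by the induction hypothesis. We then extend both environments, which preserves the hypothesis, and apply the induction hypothesis to $e_2$.
\item $\val$, $f(\cdot)$ and $\decidef_t$ do not produce a $\TVal$ directly, or they pass the stamp through unchanged. Here we also use the determinism of the big-step semantics under Assumption~\ref{ass:pure}.
\end{itemize}

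The main obstacle is the series operators, together with the fact that the operational semantics is only given in Section~\ref{sec:semantics}. We must check that no series rule selects positions, or computes its output stamp, from values. One example would be a resample whose bucket boundaries depend on data, or a window whose length depends on data. We would discharge this by noting that the grammar fixes $k$ and $\rho$ as syntactic parameters, and that clause~(4) of Definition~\ref{def:acceptance} fixes join alignment syntactically. The index sets chosen are therefore value-independent, and the stamps they select agree by the induction hypothesis.

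Note that the hypothesis $I\approx_t I'$ is not actually needed for stamps. The argument shows stamps agree for arbitrary $I,I'$ sharing the same stamp assignment to base data, which $\approx_t$ together with value-independence of base stamps guarantees.
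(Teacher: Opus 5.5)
Your proposal is correct and is essentially the paper's argument. Stamps in $\Frag$ are denotations of $\Av$-terms or $\max$es of already-computed stamps, Lemma~\ref{lem:value-indep} makes them functions of referenced stamps alone, and induction on the derivation propagates agreement; your environment-strengthened hypothesis makes the $\letf$ and series cases explicit where the paper leaves them implicit. Your observation that $I\approx_t I'$ is dispensable is a real sharpening: equality of stamps above $t$ comes from the input-assignment convention $\alpha=\dom{A}$, not from $\approx_t$-pinning as the paper's proof suggests, so your closing sentence should credit that convention alone (also, window and scan stamps are $\max$es over the window or with the seed stamp, not the current element's stamp, which does not affect agreement).
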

\begin{proof}
Every stamp produced in $\Frag$ is the denotation of an availability
term $A\in\Av$ (by the grammar and rules \rulename{T-Base},
\rulename{T-Stamp}, \rulename{T-Asof}, \rulename{T-Join}). By
Lemma~\ref{lem:value-indep}, $A$ depends on inputs only through the
stamps of referenced data. Each referenced stamp is a constant, $t$, or
$\av(x)$; by induction on the typing derivation, every such $\av(x)$ is
either pinned by $\approx_t$ (when $\le t$) or enters only through
$\max$, $\min$, or $+\delta$ without affecting the value branch. The
lemma is not circular: it does not presuppose acceptance of any
decision, only that stamps are $\Av$-terms, which is a syntactic fact
about $\Frag$. Hence $\alpha=\alpha'$.
\end{proof}

Stamp agreement is thus a consequence of grammar---$\Av$ is
value-independent---and holds independently of the effect discipline;
the effect discipline is needed only for \emph{value} agreement.

\begin{lemma}[Fundamental Lemma]
\label{lem:fundamental}
Let $\Gamma \vdash e : \tau\,!\,\eff$ in $\Frag$. For all $t$ and all
$I \approx_t I'$ with environments related pointwise by
$\sim_{t,I,I'}$ at their declared effects, if $I\vdash_t e\Downarrow r$
and $I'\vdash_t e\Downarrow r'$, then $r \sim^{\eff}_{t,I,I'} r'$.
\end{lemma}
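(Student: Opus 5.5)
We argue by induction on the typing derivation of $\Gamma \vdash e : \tau\,!\,\eff$, simultaneously with the evaluation derivations $I\vdash_t e\Downarrow r$ and $I'\vdash_t e\Downarrow r'$. The determinacy of the semantics lets us match the two runs rule for rule. For every case whose result is time-indexed, we dispatch the stamp component once and for all by Lemma~\ref{lem:avail-det-final}, which gives $\alpha=\alpha'$ directly. The remaining obligation is therefore always the value clause: assuming $\dom{\eff}_{t,I}=\dom{\eff}_{t,I'}\le t$, show $v=v'$. Since $\eff\in\Av$, its denotation depends only on $t$, constants and stamps (Lemma~\ref{lem:value-indep}), and those stamps agree across the runs. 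The equality $\dom{\eff}_{t,I}=\dom{\eff}_{t,I'}$ is therefore automatic, and the real content lies in the implication.

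The structural cases go as follows.
\begin{itemize}
\item \rulename{T-Base}: the effect is $A$. If $\dom{A}\le t$, the base datum is available by $t$, so $I\approx_t I'$ (Definition~\ref{def:agree}) forces $d=d'$.
\item Variables: the claim follows from the hypothesis that environments are related at their declared effects.
\item \rulename{T-Val}: immediate from the induction hypothesis, since the effect is unchanged.
\item \rulename{T-Op} and \rulename{T-Join}: if $\max_i\eff_i\le t$, then each $\eff_i\le t$, so the induction hypothesis yields pairwise equal arguments. Assumption~\ref{ass:pure} then makes $f$, respectively the aligned tuple, equal.
\item \rulename{T-Stamp}: the effect is unchanged and the new stamp comes from $A\in\Av$, so the case is immediate. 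This is the re-stamping crux, and it is trivial exactly because the rule refuses to lower $\eff$ to $A$.
\item \rulename{T-Decide}: the premise $\eff\sqsubseteq t$ makes the body's value admissible, so the induction hypothesis gives equality.
\item \rulename{T-Let}: evaluate $e_1$ to related results at effect $\eff_1$ and extend the environment; these results are related at their declared effect. Apply the induction hypothesis to $e_2$, then transport the conclusion through the substitution $\eff_2[\eff_1/\av(x)]$. This transport works because $\av(x)$ denotes at most $\eff_1$ and $\Av$-terms are monotone.
\end{itemize}

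\rulename{T-Asof} (and identically $\retrievef$) needs the as-of semantics rather than the induction hypothesis on the store, because the effect $\eff_s$ of the store may exceed $t$. Suppose $\dom{A}\le t$. The read only inspects store entries with availability $\le\dom{A}\le t$, and $I\approx_t I'$ pins exactly those entries. The visible portions therefore coincide, and by determinism the two results are equal. I would isolate this as a small auxiliary claim: the as-of read factors through the restriction of the store to availability $\le A$.

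I expect the series operators (\rulename{T-Window}, \rulename{T-Scan}, and resampling) to be the main obstacle. There the effect $\eff$ describes the \emph{current} element, yet the result depends on earlier positions. My plan is to invoke the availability-monotonicity invariant, which is established globally from acceptance clauses 3--4 by Lemma~\ref{lem:mono-preserve}, together with the causality lemma (Lemma~\ref{lem:causality}). Together these show that every position the operator reads has availability at most that of the current position. Hence, if $\eff\le t$, every input element read is pinned by the induction hypothesis. For $\scanf$ I would run an inner induction over positions, with the accumulator related at $\max(\eff_0,\eff)$ and purity of $f$ giving equal steps. For $\resamplef_\rho$, the causal-rule condition $\max\rho^{\leftarrow}(p)\preceq\iota(p)$ plays the same role.
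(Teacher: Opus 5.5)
Your plan follows the paper's proof case for case, with one exception: induction on the typing derivation, stamps dispatched once by Lemma~\ref{lem:avail-det-final}, the $\max$-of-effects argument for \rulename{T-Val}/\rulename{T-Op}, $\approx_t$-pinning for \rulename{T-Base} and \rulename{T-Asof}, and the series cases via Lemmas~\ref{lem:mono-preserve} and~\ref{lem:causality}. The exception is \rulename{T-Let}, and that case fails. Your transport needs $\dom{\av(x)}\le\dom{\eff_1}$. But $\av(x)$ denotes the \emph{stamp} of $x$, and \rulename{T-Stamp} lets a term carry a stamp unrelated to its effect. Take, for any store term $e_s$,
\[
  \decidef_t\big(\letf\ x = \stampf(\val(\langle d,\,t{-}1\rangle),\,t{+}5)\ \inkw\ \asoff(e_s,\av(x))\big).
\]
The binding has effect $t{-}1$ and the body has effect $\av(x)$. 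So \rulename{T-Let} yields $t{-}1\sqsubseteq t$, and the term is accepted (clauses 3--4 of Definition~\ref{def:acceptance} are vacuous). Yet $x$ is bound to a value stamped $t{+}5$, so \rulename{E-Asof} reads every store entry available by $t{+}5$. Two inputs $I\approx_t I'$ that differ on an entry available at $t{+}1$ therefore emit different values. The statement is thus false for \rulename{T-Let} as written, and no monotonicity argument can close the case. The paper's own \rulename{T-Let} case has the same hole; it disposes of the case with ``the substituted effect tracks the dependency.''

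The root cause is that $\av(x)$ plays two roles. In availability arguments it is $x$'s runtime stamp, while the substitution in \rulename{T-Let} treats it as $x$'s effect. Re-stamping separates these in both directions. A later stamp breaks the as-of case, as above. An earlier stamp (the insidious re-stamp) breaks the variable case if occurrences of $x$ are checked at $\av(x)$: from relatedness at $\eff_1$ you cannot conclude relatedness at a smaller stamp. So neither reading makes your induction go through.

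To repair it:
\begin{itemize}
\item have the judgement compute a stamp term $\sigma\in\Av$ alongside the effect (e.g.\ $A$ for base data, $\stampf$ and $\asoff$; the $\max$ of the component stamps for $\joinf$), and prove that $\sigma$ denotes the runtime stamp;
\item record both $\eff_1$ and $\sigma_1$ for $x$ in $\Gamma$, and type occurrences of $x$ at $\eff_1$, so the environment hypothesis applies verbatim;
\item in \rulename{T-Let}, substitute $\sigma_1$ rather than $\eff_1$ for $\av(x)$.
\end{itemize}
The case then reduces to the exact identity $\dom{\eff_2[\sigma_1/\av(x)]}_{t,I}=\dom{\eff_2}_{t,I[x\mapsto r_1]}$, with no monotonicity needed. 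The example above is then rejected, because its effect becomes $t{+}5$.
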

\begin{proof}
By induction on the typing derivation. Stamps agree throughout by
Lemma~\ref{lem:avail-det-final}, so we track only value agreement, and
only in the case $\dom{\eff}_{t,I}\le t$.

\emph{\rulename{T-Base}.} $\langle d,A\rangle$ has effect $A$. If
$\dom{A}_t \le t$ the datum is admissible, so by $\approx_t$ its value
is pinned and $d=d$.

\emph{\rulename{T-Val}, \rulename{T-Op}.} The effect is the $\max$ of
the premises' effects. If $\max_i\eff_i \le t$ then every $\eff_i \le
t$, so by the induction hypothesis each operand value agrees; $f$ is
pure, so $f(\vec v)=f(\vec v')$.

\emph{\rulename{T-Stamp}.} The effect is $\eff$, unchanged by
re-stamping. If $\eff\le t$ the induction hypothesis gives value
agreement, and the new stamp $A$ agrees by
Lemma~\ref{lem:avail-det-final}. The advertised stamp $A$ is irrelevant
to the value relation---this is exactly why re-stamping cannot launder a
future value past the discipline. In the classical taxonomy this is an \emph{intransitive} flow~\cite{rushby1992noninterference}: a restamp is a permitted relabelling channel, but permitting it must not thereby permit the future value it carries to reach the decision.

\emph{\rulename{T-Asof} (and \rulename{T-Retrieve}).} The effect is $A$.
If $A\le t$, the as-of semantics read only store entries with
availability $\le A \le t$, all pinned by $\approx_t$; hence identical
content is selected and the values agree.

\emph{\rulename{T-Window}, \rulename{T-Scan}, \rulename{T-Join}.} The
effects are $\max$es of premises, and the causal semantics reach no
position later than the current one (Lemma~\ref{lem:causality}). If the
effect is $\le t$, all contributing elements are admissible and the
induction hypothesis applies elementwise. The series cases are proved in
full in Section~\ref{sec:series-hardening}, which supplies the causality
and monotonicity results this step invokes.

\emph{\rulename{T-Let}.} By the induction hypothesis $e_1$'s result
relates at $\eff_1$; extend the environment relation at $x$ and apply the
induction hypothesis to $e_2$; the substituted effect
$\eff_2[\eff_1/\av(x)]$ tracks the dependency.

\emph{\rulename{T-Decide}.} The premise $\eff\sqsubseteq t$ forces the
body's effect $\le t$ at this epoch. By the induction hypothesis the
body's value agrees, and its stamp agrees, so the emitted decision
$\langle v,\alpha\rangle$ is identical across the two runs.
\end{proof}

\begin{proof}[Proof of Theorem~\ref{thm:soundness}]
Suppose $\vdash_{\Frag} e$ and fix any $t$, any $I$, and any $I'\approx_t
I$. Every $\decidef_t$ subterm is well-typed with its \rulename{T-Decide}
premise discharged. Applying Lemma~\ref{lem:fundamental} at that subterm
yields an identical emitted $\langle v,\alpha\rangle$ under $I$ and $I'$.
Since $t$, $I$, and $I'$ were arbitrary, $e$ satisfies
Definition~\ref{def:laf}: it is look-ahead-free.
\end{proof}

\subsection{Relative Completeness}
\label{sec:completeness}

The discipline is sound but conservative: it rejects some
look-ahead-free terms whose freedom is semantic rather than
availability-structural---for instance, a value read from the future but
provably cancelled inside a value operation. The following proposition
states the strongest positive result available to any check that treats
value operations as opaque.

\begin{proposition}[Completeness relative to the effect abstraction]
\label{prop:rel-complete}
If $e\in\Frag$ is rejected by $\vdash_{\Frag}$, then there exists an
interpretation of $e$'s value operations, consistent with
Assumption~\ref{ass:pure}, under which $e$ is not look-ahead-free.
Equivalently, the effect system is the most permissive sound discipline
that treats value operations as opaque.
\end{proposition}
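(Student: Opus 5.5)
Since $e\in\Frag$, the fragment clause of Definition~\ref{def:acceptance} holds. A rejection must therefore come from one of two sources: a failed \rulename{T-Decide} obligation $\eff\not\sqsubseteq t$ at some $\decidef_t$ subterm, or a failed local clause (3)--(4). I will treat clauses (3)--(4) separately. A non-causal $\resamplef_\rho$ or an offset $\joinf$ reads a position strictly later than the current one. Taking the identity interpretation for the value operations then exhibits a witness directly: the operator's output value is a future element, so it is not look-ahead-free. The main case is the typing failure. Fix the offending subterm $\decidef_t(e_0)$ with computed effect $\eff$. By Lemma~\ref{lem:effect-order} and the definition of $\sqsubseteq$, there exist an epoch $t_0$ and a stamp valuation $I$ with $\dom{\eff}_{t_0,I}>t_0$. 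I will fix this $t_0$ and $I$ throughout.

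Next I will extract a \emph{witness path} from the typing derivation. The effect is built only by $\max$ (\rulename{T-Op}, \rulename{T-Join}, \rulename{T-Scan}), by pass-through (\rulename{T-Val}, \rulename{T-Stamp}, \rulename{T-Window}), by substitution (\rulename{T-Let}), and by leaves $A$ (\rulename{T-Base}, \rulename{T-Asof}, \rulename{T-Retrieve}). Under $I$, I can therefore descend from the decision: at each $\max$ I choose a premise attaining the maximum, which must exceed $t_0$, and at a $\letf$ I follow into $e_1$ when the maximising component is $\av(x)$. This reaches a leaf whose availability term $A$ satisfies $\dom{A}_{t_0,I}>t_0$. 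Such a leaf is either a base datum or an $\asoff$/$\retrievef$ read, whose store content at $A$ can be chosen to include a datum stamped exactly $A$. I will choose the value-operation interpretation adversarially along this path. Every $f$ on the path, including scan fold functions, becomes the projection onto the argument position that lies on the path. Every $f$ off the path is arbitrary, say a projection as well. All of these are total and pure, so Assumption~\ref{ass:pure} is respected. Then I take $I'$ to agree with $I$ except that the chosen leaf's value is changed. Because the leaf's stamp exceeds $t_0$, we have $I'\approx_{t_0} I$. By Lemma~\ref{lem:value-indep} and Lemma~\ref{lem:avail-det-final}, all stamps, and hence all control and alignment choices, coincide in the two runs. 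Along the chain of projections the changed value propagates unchanged to the body of $\decidef_{t_0}$, so the emitted decisions differ and Definition~\ref{def:laf} fails.

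I expect the main obstacle to be the precision of the effect rules: each $\max$ in a rule must correspond to a dependency that is actually realisable. Two places need care. For \rulename{T-Window} and \rulename{T-Scan}, I must use Lemma~\ref{lem:mono-preserve} and the causal semantics to show that the current element, whose effect is $\eff$, genuinely sits in the window or fold, and that the seed $e_0$ is genuinely read at the initial position. If the seed is dominated only at non-initial positions, I will realise the witness at the initial position instead. For \rulename{T-Let} with a non-linear use of $x$, the maximising occurrence of $\av(x)$ in $\eff_2$ arises from a concrete use site of $x$, so the path can follow that use. The normal form from Lemma~\ref{lem:effect-order} also needs care, because $\min$ nodes in availability terms are not abstracted by the effect rules. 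They appear only inside leaf terms $A$, where they affect which time is read but not which data flows, so the path never has to pass through them. Finally, the ``equivalently'' clause follows directly from this argument. Any sound discipline that treats operations as opaque must be sound under the adversarial interpretation, so it must reject every term that $\vdash_{\Frag}$ rejects.
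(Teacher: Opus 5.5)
Your main case is the paper's own argument made explicit. The paper's proof is the one-sentence version: a future datum reaches the failing $\decidef_t$ through opaque operations, so interpret them as projections and perturb the datum. Your treatment of $\asoff$ leaves and of $\scanf$ seeds is more careful than the paper's. However, the two places where you go beyond that sentence fail, because the statement itself is false there.

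\textbf{Clauses (3)--(4).} A rejection for an undeclared base series, or for a non-causal $\resamplef_\rho$, says nothing about whether the offending operator ever reaches a decision. Definition~\ref{def:laf} constrains only emitted decisions. Moreover, reading a later \emph{position} is not the same as reading data unavailable at the decision epoch. A term with no $\decidef_t$ at all, or one whose misaligned series is bound and never used, is rejected yet look-ahead-free under every interpretation. Your claim that the ``identity interpretation exhibits a witness directly'' skips exactly this step, and you do not address clause (3) at all. Only rejections at \rulename{T-Decide} can be claimed, which is tacitly the only case the paper treats.

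\textbf{The \rulename{T-Let} step.} You take the valuation witnessing $\eff\not\sqsubseteq t$ and use it as an input assignment. In a run, though, $\av(x)$ is the stamp of $x$'s binding, not a free parameter. \rulename{T-Let} substitutes $x$'s \emph{effect} $\eff_1$ for $\av(x)$ even where $\av(x)$ occurs inside an availability argument of the body. There it denotes $x$'s \emph{stamp}, which \rulename{T-Stamp} allows to lie arbitrarily far below $\eff_1$. So a maximising occurrence of $\av(x)$ may come from a stamp use rather than a value use, and following it into $e_1$ yields no flow. Your own remark that terms inside a leaf's $A$ change which time is read but not which data flows is precisely what breaks. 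Concretely, consider
\[
\decidef_t\big(\letf\ x=\stampf(\val(\langle d,t{+}5\rangle),\,t{-}1)\ \inkw\ \asoff(s,\av(x))\big).
\]
This term is in $\Frag$ and has body effect $(\av(x))[t{+}5/\av(x)]=t{+}5\not\sqsubseteq t$, so it is rejected. Yet it reads the store as of $t-1$ and contains no value operation, so it is look-ahead-free under every interpretation. Similarly, $\letf\ x=\langle d,t{-}1\rangle\ \inkw\ \decidef_t(\asoff(s,\av(x)))$ is rejected, because its check quantifies over every value of the free $\av(x)$, although no input gives $x$ a stamp above $t-1$.

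No choice of projections repairs this. Your path argument essentially goes through for let-free terms, where every effect is a function of $t$ alone and the witnessing valuation is realisable. Beyond that, the proposition needs either a restriction on where let-bound $\av(x)$ may occur, or a system that tracks a bound variable's stamp and effect separately. The paper's proof simply asserts that the future datum exists, and so hides the same gap.
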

\begin{proof}
Rejection means some $\decidef_t$ has body effect
$\eff\not\sqsubseteq t$: a base datum with availability $>t$ whose value
reaches the decision through opaque operations. Choose those operations
to be projections that expose the datum; then a future perturbation of
it changes the decision, violating Definition~\ref{def:laf}. Hence no
sound discipline that cannot inspect inside value operations can accept
$e$.
\end{proof}

This is the honest ceiling of the approach. Exact look-ahead-freedom is
value-semantic, and any decidable, function-agnostic check is complete
only relative to opaque value operations. Inspecting inside a value
operation---recognising, say, the algebraic cancellation of a future
term---recovers more programs at the cost of decidability, which is
another face of the boundary that Theorem~\ref{thm:undecidable} draws.

\section{Hardening the Series Operators}
\label{sec:series-hardening}

The soundness proof of Theorem~\ref{thm:soundness} treated $\scanf$,
$\windowf$, $\resamplef$, and $\joinf$ by appealing to the fact that
their causal semantics reach no position later than the current one. We
now discharge that appeal. This section makes series and their
evaluation precise, defines the availability-monotonicity invariant and
proves that every operator preserves it, establishes a Causality Lemma
certifying that the typed effect of a series term equals the true stamp
of its output, and, on that basis, replaces the asserted series cases of
the Fundamental Lemma with proved ones. It also surfaces the two local
obligations---clauses (3) and (4) of Definition~\ref{def:acceptance}---%
that the checker must discharge.

\subsection{Series and Positions}
\label{sec:series}

A term of series type denotes a function $\bar{e} : \Pos \to \TVal$ from
a totally ordered \emph{position} domain $(\Pos,\preceq)$---bar indices,
event ordinals---to time-indexed values. Positions and times are
distinct: $\Pos$ orders \emph{where} an element sits in a series, while
availability $\av$ records \emph{when} it becomes knowable. We write
$\bar{e}[p] = \langle v_p, \alpha_p\rangle$ for the element at position
$p$, and $\av(\bar e[p]) = \alpha_p$.

\begin{definition}[Availability-monotone series]
\label{def:monotone}
A series $\bar e$ is \emph{availability-monotone} if and only if
$p \preceq q \Rightarrow \av(\bar e[p]) \le \av(\bar e[q])$. We write
$\Mono(\bar e)$.
\end{definition}

Availability-monotonicity says that later positions are never available
earlier---the defining regularity of an honestly stamped time series. It
is a well-formedness invariant, not an assumption about arbitrary data:
base series are monotone by construction, since a source publishes bar
$t$ no earlier than bar $t-1$, and we prove that every operator preserves
it. Where an operator \emph{could} break it---a misaligned $\resamplef$
or $\joinf$---the preservation lemma's premises identify exactly the
well-formedness side-conditions the checker must enforce, turning a
silent leakage channel into an explicit, checked obligation.

\subsection{Operational Semantics of the Series Operators}
\label{sec:series-opsem}

Evaluation of a series term yields, at each position $p$, a time-indexed
value, written $I \vdash \bar e[p] \Downarrow \langle v,\alpha\rangle$.
The rules below are the causal definitions; each references only
positions $\preceq p$.

Every series has a least position $p_0$ under $\preceq$, its \emph{start};
$\predf(p)$ denotes the immediate predecessor of $p$, defined for
$p\ne p_0$. Clamped back-subtraction is
\[
  p \ominus j \;=\;
  \begin{cases}
    \text{the position } j \text{ steps back from } p, & \text{if it is } \succeq p_0,\\[2pt]
    p_0, & \text{otherwise (clamp at the start).}
  \end{cases}
\]
Thus for $p$ within $k-1$ steps of the start, the window
$[\,p\ominus(k{-}1),\,p\,]$ is the truncated prefix $[\,p_0,\,p\,]$
rather than reaching before $p_0$: there is no position earlier than
$p_0$ to read. This fixes the base case of the inductions in
Section~\ref{sec:causality}, where at $p_0$ the window and the scan read
only $p_0$ itself.

The last-$k$ window collects positions in $[p\ominus(k{-}1),p]$, never
beyond $p$, and stamps its output with the latest availability among
them:
\[
\frac{\{\,I \vdash \bar e[q]\Downarrow\langle v_q,\alpha_q\rangle\,\}_{\,p\ominus (k-1)\preceq q\preceq p}}
     {I \vdash \windowf_k(\bar e)[p] \Downarrow
        \big\langle\, [\,v_{q}\,]_{q},\;\; \textstyle\max_{p\ominus(k-1)\preceq q\preceq p} \alpha_q \,\big\rangle}
\quad(\rulename{E-Window})
\]
The causal fold is defined from its predecessor accumulator and the input
at the current position only, and stamps with the running $\max$ of input
stamps:
\[
\frac{I \vdash \bar e[p_0]\Downarrow\langle v_0',\alpha_0'\rangle}
     {I \vdash \scanf(f,e_0,\bar e)[p_0] \Downarrow
        \langle\, f(\val(e_0), v_0'),\; \max(\alpha_{e_0}, \alpha_0') \,\rangle}
\quad(\rulename{E-Scan-Init})
\]
\[
\frac{I \vdash \scanf(f,e_0,\bar e)[\predf(p)]\Downarrow\langle a,\alpha_a\rangle
      \qquad
      I \vdash \bar e[p]\Downarrow\langle v_p,\alpha_p\rangle}
     {I \vdash \scanf(f,e_0,\bar e)[p] \Downarrow
        \langle\, f(a, v_p),\;\; \max(\alpha_a,\alpha_p)\,\rangle}
\quad(\rulename{E-Scan-Step})
\]
This is the rule whose stamp bound the typing rule \rulename{T-Scan} must
match. Join reads position $p$ of each source and stamps with the latest,
formalising the principle that a joined row is available when its last
component is:
\[
\frac{\{\,I \vdash \bar e_i[p]\Downarrow\langle v_i,\alpha_i\rangle\,\}_{i=1}^{n}}
     {I \vdash \joinf(\bar e_1,\dots,\bar e_n)[p] \Downarrow
        \langle\, (v_1,\dots,v_n),\;\; \textstyle\max_i \alpha_i \,\rangle}
\quad(\rulename{E-Join})
\]
Finally, $\resamplef_\rho$ maps an output position $p$ to a set
$\rho^{\leftarrow}(p) \subseteq \Pos$ of source positions aggregated into
it---for example, the minutes composing daily bar $p$:
\[
\frac{\{\,I \vdash \bar e[q]\Downarrow\langle v_q,\alpha_q\rangle\,\}_{q\in\rho^{\leftarrow}(p)}}
     {I \vdash \resamplef_\rho(\bar e)[p] \Downarrow
        \big\langle\, \aggf([v_q]),\;\; \textstyle\max_{q\in\rho^{\leftarrow}(p)} \alpha_q \,\big\rangle}
\quad(\rulename{E-Resample})
\]
The rule $\rho$ is \emph{causal} if $\max \rho^{\leftarrow}(p) \preceq
\iota(p)$, where $\iota(p)$ is the source position identified with output
$p$'s emission point; that is, an output bar aggregates only source
positions at or before its own close. Up-sampling requires the dual
(each output maps to a unique source position $\preceq$ it). A non-causal
$\rho$ is the classic resample leak, which the checker rejects
(clause~(4) of Definition~\ref{def:acceptance}).

\subsection{Preservation of Monotonicity}
\label{sec:preservation}

\begin{lemma}[Monotonicity preservation]
\label{lem:mono-preserve}
If every base series is availability-monotone, then:
\begin{enumerate}
  \item $\windowf_k(\bar e)$ and $\scanf(f,e_0,\bar e)$ are monotone
        whenever $\bar e$ is;
  \item $\joinf(\bar e_1,\dots,\bar e_n)$ is monotone whenever each
        $\bar e_i$ is;
  \item $\resamplef_\rho(\bar e)$ is monotone whenever $\bar e$ is
        monotone and $\rho$ is causal.
\end{enumerate}
\end{lemma}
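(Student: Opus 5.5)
The plan is to prove each clause directly from the evaluation rules \rulename{E-Window}, \rulename{E-Scan-Init}/\rulename{E-Scan-Step}, \rulename{E-Join}, and \rulename{E-Resample}. Each rule stamps its output at position $p$ with a $\max$ over stamps drawn from a set $S(p)$ of source positions. The common pattern is this: fix $p\preceq q$, write each output stamp as such a $\max$, and exhibit, for every contributing stamp at $p$, a contributing stamp at $q$ that dominates it. The dominating stamp comes either from the same position, or from a later position of a monotone input via $\Mono$. The hypothesis that every base series is monotone is used only to start a structural induction over series terms, so that ``$\bar e$ monotone'' is available at each operator. The clauses themselves are stated relative to monotone inputs, so each can be proved locally.

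\textbf{Join} is the easiest case and I would do it first. For $p\preceq q$, $\Mono(\bar e_i)$ gives $\alpha_i(p)\le\alpha_i(q)$ for each $i$. Since $\max$ is monotone in each argument, $\max_i\alpha_i(p)\le\max_i\alpha_i(q)$. The common-position alignment in clause (4) of Definition~\ref{def:acceptance} is exactly what makes the rule read position $p$ of every source; under a look-ahead offset this argument would not apply as stated.

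\textbf{Window.} By $\Mono(\bar e)$ the maximum over $[p\ominus(k{-}1),p]$ is attained at $p$ itself, so the output stamp equals $\alpha_p$. This holds for the clamped prefix $[p_0,p]$ as well, since every position in it is $\preceq p$. Monotonicity of the window is then inherited from $\alpha_p\le\alpha_q$.

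\textbf{Scan.} I would first prove by induction on position, with base case \rulename{E-Scan-Init} and step \rulename{E-Scan-Step}, that the output stamp at $p$ is $\max(\alpha_{e_0},\max_{p_0\preceq r\preceq p}\alpha_r)$. Monotonicity then follows because the index set grows with $p$. Alternatively it follows directly from the step rule, since $\max(\alpha_a,\alpha_p)\ge\alpha_a$, which gives the result along predecessors and then along $\preceq$ by transitivity. This second route does not even need $\Mono(\bar e)$. I will note that it also yields the bound $\max(\eff_0,\eff)$ used by \rulename{T-Scan}.

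\textbf{Resample.} This is the step I expect to be the real obstacle. The output stamp at $p$ is $\max_{r\in\rho^{\leftarrow}(p)}\alpha_r$. With $\Mono(\bar e)$ this equals $\alpha_{\max\rho^{\leftarrow}(p)}$, so the output is monotone provided $p\preceq q\Rightarrow\max\rho^{\leftarrow}(p)\preceq\max\rho^{\leftarrow}(q)$. Causality alone, $\max\rho^{\leftarrow}(p)\preceq\iota(p)$, gives only an upper bound. What is additionally needed is that the aggregation buckets advance with the output position, i.e.\ that $\iota$ is order-preserving and each bucket's last element is its emission point. For downsampling I would take $\max\rho^{\leftarrow}(p)=\iota(p)$ with $\iota$ monotone. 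For upsampling, the unique-source-position dual makes $\rho^{\leftarrow}(p)$ a singleton $\{\sigma(p)\}$, and I would require $\sigma$ monotone. I would present this as the precise reading of ``$\rho$ causal'' under which clause 3 holds: a well-formed resampling rule maps later output bars to later source bars. I would also note that this is again an $O(1)$-per-operator check, consistent with Definition~\ref{def:acceptance}. If the authors' notion of causality is meant to include this ordering, the proof reduces to the one-line $\Mono$ argument above.
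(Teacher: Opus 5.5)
Your proposal is correct and is essentially the paper's proof. Each output stamp is a $\max$ over a position set that advances with $p$. The scan is handled exactly as in the paper, via $\max(\alpha_a,\alpha_p)\ge\alpha_a$ along predecessors. For the window you use the slightly sharper fact that $\Mono$ puts the maximum at $p$, where the paper instead compares the two windows pointwise.

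Your concern about resampling is well founded, and it applies to the paper as well. The paper's proof asserts that $p\preceq p'$ makes $\rho^{\leftarrow}(p)$ pointwise $\preceq\rho^{\leftarrow}(p')$. The stated causality condition $\max\rho^{\leftarrow}(p)\preceq\iota(p)$ does not imply this. Take $p\preceq q$ with $\rho^{\leftarrow}(p)=\{5\}$, $\iota(p)=5$, $\rho^{\leftarrow}(q)=\{1\}$, $\iota(q)=10$, over a strictly monotone source: $\rho$ is causal, yet the output stamp drops from $p$ to $q$. So the bucket-ordering requirement you add is precisely the hypothesis that clause~3 silently needs, and causality itself does no work for monotonicity in that clause.
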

\begin{proof}
Each output stamp is a $\max$ of input stamps over a position set that is
itself monotone in $p$. For \rulename{E-Window}, $p\preceq p'$ gives
$[p\ominus(k{-}1),p]$ pointwise $\preceq [p'\ominus(k{-}1),p']$, and the
$\max$ over a $\le$-dominated multiset of a monotone series is
non-decreasing. For \rulename{E-Scan-Step}, the stamp at $p$ is
$\max(\text{stamp at }\predf(p),\alpha_p)\ge$ the stamp at $\predf(p)$,
immediately. For \rulename{E-Join}, the componentwise $\max$ of monotone
series is monotone. For \rulename{E-Resample} with causal $\rho$,
$p\preceq p'$ implies $\rho^{\leftarrow}(p)$ lies pointwise $\preceq
\rho^{\leftarrow}(p')$ and each is $\preceq$ its emission point, so the
$\max$ is non-decreasing.
\end{proof}

Monotonicity is thus an invariant of every well-formed $\Frag$-pipeline,
established once at the leaves and preserved at every node---exactly the
discipline that lets the effect terms of $\Av$, which use $\max$ and
$\min$ over positions, soundly bound availability.

\subsection{The Causality Lemma}
\label{sec:causality}

\begin{lemma}[Causality]
\label{lem:causality}
Let $\bar e$ be a series term of $\Frag$ built from monotone base series
by $\windowf$, $\scanf$, $\joinf$, and causal $\resamplef$. Then for
every position $p$:
\begin{enumerate}
  \item \textup{(Position causality)} the value and stamp of $\bar e[p]$
        are functions only of base-series elements at positions
        $\preceq_{\!*} p$, where $\preceq_{\!*}$ is the pointwise-back
        reach of the operators (the identity for window, scan, and join;
        $\rho^{\leftarrow}$ for resample); and
  \item \textup{(Stamp equals effect bound)} the stamp $\alpha_p$ equals
        the $\max$ availability over those reachable base elements, and
        this equals $\dom{\eff}$ for the effect $\eff$ assigned by the
        typing rules \rulename{T-Window}, \rulename{T-Scan},
        \rulename{T-Join}, and \rulename{T-Resample}.
\end{enumerate}
\end{lemma}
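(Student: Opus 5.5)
We prove both parts simultaneously by structural induction on the series term $\bar e$, and within each operator case by induction on positions along $\preceq$ where the operator is recursive (the scan). The induction hypothesis for a subterm $\bar e'$ is the full statement: for every position $q$, $\bar e'[q]$ is determined by base elements in a reach set $R_{\bar e'}(q)$ of positions $\preceq_{\!*} q$, and its stamp $\alpha_q$ equals $\max_{r\in R_{\bar e'}(q)}\av(\text{base}[r])$, which equals $\dom{\eff'}$. The base case is a monotone base series, with $R(q)=\{q\}$ and effect the leaf stamp term. Here both claims hold by \rulename{T-Base} and the declared stamps of clause~(3) of Definition~\ref{def:acceptance}.

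\textbf{Position causality.} This part is read directly off the evaluation rules of Section~\ref{sec:series-opsem}. \rulename{E-Window} consults only $q\in[p\ominus(k{-}1),p]$. \rulename{E-Join} consults only position $p$ of each source. \rulename{E-Resample} consults $\rho^{\leftarrow}(p)$, whose maximum is $\preceq\iota(p)$ by causality of $\rho$ (clause~(4)). \rulename{E-Scan-Init} and \rulename{E-Scan-Step} consult the seed, the accumulator at $\predf(p)$ and the input at $p$; an inner induction from $p_0$ (the clamp fixing the base case) gives reach $[p_0,p]$. Composing with the outer hypothesis yields reach sets that are unions of reach sets of subterm positions $\preceq p$. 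Assumption~\ref{ass:pure} guarantees that $f$ and $\aggf$ add no hidden dependencies, so the value is a function of exactly these elements.

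\textbf{Stamp equals effect.} Each rule stamps its output with $\max$ of the consulted stamps, and $\max$ of $\max$es over a union is the $\max$ over the union. So $\alpha_p=\max_{R(p)}$ follows from the hypothesis. The remaining task is to match this with the typed effect, which names only the current element's effect (\rulename{T-Window}: $\eff$; \rulename{T-Scan}: $\max(\eff_0,\eff)$; \rulename{T-Join}: $\max_i\eff_i$; \rulename{T-Resample}: the effect at the emission point). Lemma~\ref{lem:mono-preserve} supplies this. Because the subterm is availability-monotone, the $\max$ over $[p\ominus(k{-}1),p]$ or over $[p_0,p]$ is attained at $p$ itself, so it equals $\dom{\eff}$ at $p$; for the scan, the seed stamp contributes the extra $\eff_0$. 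Join needs no monotonicity. For resample, monotonicity together with $\max\rho^{\leftarrow}(p)\preceq\iota(p)$ bounds the stamp by the element at $\iota(p)$, with equality when $\iota(p)\in\rho^{\leftarrow}(p)$.

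\textbf{Main obstacle.} I expect the hardest step to be the exact equality in part~(2) for resample, and the interpretation of $\dom{\eff}$ as position-indexed. For resample I would first try to assume that the emission point lies in the aggregated set. If that fails, I would weaken ``equals'' to ``$\le$'', which suffices for the Fundamental Lemma. I would also make the convention explicit that the effect of a series term is evaluated at the current position $p$, i.e.\ that $\av(x)$ for a series variable $x$ denotes $\av(\bar x[p])$. With that convention fixed, the scan case reduces to a routine induction on $p$.
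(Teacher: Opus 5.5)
Your proposal follows the paper's proof essentially step for step: structural induction on the series term with an inner induction on positions for $\scanf$, position causality read off the evaluation rules, and availability-monotonicity (Lemma~\ref{lem:mono-preserve}) collapsing the maximum over the window or prefix to the current element's stamp. Your caveat on $\resamplef$ is well placed: the paper's own resample case likewise only shows that the stamp is bounded by the availability at $\iota(p)$, so exact equality there needs something like $\iota(p)\in\rho^{\leftarrow}(p)$, whereas the bound is all the Fundamental Lemma actually uses.
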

\begin{proof}
By induction on the series-term structure, reading off the operational
rules of Section~\ref{sec:series-opsem}.

\emph{Base series.} $\bar e[p]$ is the datum at $p$; its reach is
$\{p\}$; its stamp is $\alpha_p$; and its effect (\rulename{T-Base}) is
that stamp.

\emph{Window.} \rulename{E-Window} reads positions
$[p\ominus(k{-}1),p]\subseteq \{q : q\preceq p\}$ and stamps with their
$\max$ availability. By the induction hypothesis each is a function of
base elements $\preceq$ itself, hence $\preceq p$; by $\Mono$, the $\max$
over the window equals the availability of position $p$, the latest,
matching \rulename{T-Window}'s effect.

\emph{Scan.} By induction on positions using
\rulename{E-Scan-Init}/\rulename{Step}: the accumulator at $p$ is a
function of inputs at $\{p_0,\dots,p\}$, all $\preceq p$; the stamp is the
$\max$ of those input stamps, which by $\Mono$ is the availability at
$p$, matching $\max(\eff_0,\eff)$ evaluated at $p$. This is the case the
soundness proof previously asserted; it is now a consequence of
\rulename{E-Scan-Step} reading only $\predf(p)$ and $p$.

\emph{Join.} \rulename{E-Join} reads position $p$ of each source; the
reach is the union of the per-source reaches, each $\preceq p$ by the
induction hypothesis; the stamp is $\max_i\alpha_i$, matching
\rulename{T-Join}.

\emph{Resample (causal $\rho$).} \rulename{E-Resample} reads
$\rho^{\leftarrow}(p)$, all $\preceq \iota(p)$ by causality, and stamps
with their $\max$; by the induction hypothesis and $\Mono$ this is
bounded by the availability at $\iota(p)$, matching
\rulename{T-Resample}.
\end{proof}

The Causality Lemma is precisely the ``reaches no position later than the
current one'' claim that the Fundamental Lemma required, now proved, and
it additionally certifies that the typed effect equals the true stamp,
closing any gap between the static bound and runtime reality.

\subsection{The Series Cases of the Fundamental Lemma}
\label{sec:fundamental-rewrite}

We can now discharge the series cases of Lemma~\ref{lem:fundamental}
without appeal to unproven behaviour. Fix $t$ and $I\approx_t I'$, and
consider a series subterm evaluated toward a decision at $t$ whose effect
$\eff$ satisfies $\dom{\eff}_{t}\le t$; otherwise no value agreement is
demanded, and by \rulename{T-Decide} the subterm cannot reach the
decision.

\emph{Window, scan, join, resample.} By the Causality Lemma
(Lemma~\ref{lem:causality}), the value at the relevant position $p$ is a
function of base elements at positions whose availability is
$\le \dom{\eff}_t \le t$. Every such base element therefore has $\av\le t$
and is pinned by $\approx_t$: identical value and stamp under $I$ and
$I'$. The operator's value function---a composition of pure operations
and selection---applied to identical inputs yields identical output, and
the stamp is identical by Lemma~\ref{lem:avail-det-final}. Hence
$\bar e[p]$ agrees across the two runs.

Every step now cites either the Causality Lemma (reach $\preceq p$ with
stamp equal to effect), monotonicity preservation (the invariant that
makes the effect bound tight), or $\approx_t$-pinning (admissible data
agrees), so the series cases stand on the same footing as the scalar
ones.

\subsection{The Resulting Checker Obligations}
\label{sec:series-check}

The hardening surfaces exactly the two obligations recorded as
clauses~(3) and (4) of Definition~\ref{def:acceptance}, both syntactic
and local. First, base series must be declared availability-monotone---a
property of the data source, asserted at ingestion and, per the stated
boundary, trusted, since incorrect source stamps remain out of scope.
Second, each $\resamplef_\rho$ must carry a causal $\rho$
($\max\rho^{\leftarrow}(p)\preceq\iota(p)$), and each $\joinf$ must align
its sources at a common position rather than across a look-ahead offset;
both are checkable by inspecting the operator's alignment argument, and a
non-causal alignment is rejected. With these local checks,
Lemma~\ref{lem:mono-preserve} guarantees the global monotonicity
invariant and Lemma~\ref{lem:causality} guarantees effect-stamp
agreement, so Theorem~\ref{thm:soundness} goes through as written.
Decidability (Theorem~\ref{thm:decidable-check}) is unaffected: both
obligations are $O(1)$ per operator, preserving the linear-time bound.

\section{Operational Semantics and Look-Ahead-Freedom}
\label{sec:semantics}

This section fixes the dynamic meaning of a pipeline and, on that basis,
states the property the discipline of Section~\ref{sec:typesystem}
enforces. We define an epoch-parametrised big-step evaluation, the
notion of two inputs \emph{agreeing up to an epoch}, and finally
\emph{look-ahead-freedom} as a two-run temporal non-interference
property. The definitions here are what Sections~\ref{sec:calculus}%
--\ref{sec:series-hardening} refer forward to; in the final assembly
this section may precede the calculus, so that the property is stated
before the machinery that checks it.

\subsection{Inputs, Stores, and Admissibility}
\label{sec:sem-inputs}

An \emph{input assignment} $I$ interprets the base data of a pipeline:
to each base datum $\langle d, A\rangle$ occurring in a term it assigns a
time-indexed value $I(\langle d,A\rangle) = \langle v,\alpha\rangle$,
where $\alpha = \dom{A}$ is the datum's availability stamp and $v$ its
value. For the vintage constructs, $I$ additionally provides a
\emph{store}: a set of time-indexed entries, each carrying its own
availability, over which $\asoff$ and $\retrievef$ read. We write
$\dom{A}_{t,I}$ for the denotation of an availability term $A$ at epoch
$t$ under $I$; by Lemma~\ref{lem:value-indep} this depends on $I$ only
through the stamps of referenced data.

A time-indexed value $\langle v,\alpha\rangle$ is \emph{admissible at
epoch $t$} if $\alpha \le t$: its value is knowable by $t$. A decision
made at $t$ may legitimately depend only on admissible data. This is the
semantic content that the effect discipline approximates statically.

\subsection{Epoch-Parametrised Evaluation}
\label{sec:sem-eval}

Evaluation is a big-step relation
\[
  I \vdash_t e \Downarrow r,
\]
read: under input assignment $I$, evaluating term $e$ toward a decision
at epoch $t$ produces result $r$ (a value, a time-indexed value, or a
series thereof). The epoch $t$ is a parameter because the calculus's
decision construct, and the as-of reads it governs, are relative to the
epoch at which a decision is being formed. The rules for the series
operators were given in Section~\ref{sec:series-opsem}
(\rulename{E-Window}, \rulename{E-Scan-Init/Step}, \rulename{E-Join},
\rulename{E-Resample}); we give here the remaining, scalar rules.

A base datum evaluates to the value and stamp the assignment gives it; a
projection discards the stamp; a pure operation applies once its
operands are evaluated:
\[
\frac{I(\langle d,A\rangle) = \langle v,\alpha\rangle}
     {I \vdash_t \langle d,A\rangle \Downarrow \langle v,\alpha\rangle}
\quad(\rulename{E-Base})
\qquad
\frac{I \vdash_t e \Downarrow \langle v,\alpha\rangle}
     {I \vdash_t \val(e) \Downarrow v}
\quad(\rulename{E-Val})
\]
\[
\frac{\{\,I \vdash_t e_i \Downarrow v_i\,\}_{i=1}^{n}}
     {I \vdash_t f(e_1,\dots,e_n) \Downarrow f(v_1,\dots,v_n)}
\quad(\rulename{E-Op})
\]
Re-stamping evaluates its subject and replaces the stamp with the
denotation of $A$, leaving the value untouched---the operational
counterpart of \rulename{T-Stamp}'s ``the value still depends on what it
depended on'':
\[
\frac{I \vdash_t e \Downarrow v \qquad \dom{A}_{t,I} = \alpha}
     {I \vdash_t \stampf(e,A) \Downarrow \langle v,\alpha\rangle}
\quad(\rulename{E-Stamp})
\]
A point-in-time read returns the store content selected as of the
denotation of $A$: among store entries with availability $\le
\dom{A}_{t,I}$, the read returns the applicable one (e.g.\ the latest
such entry for the queried key), stamped with $\dom{A}_{t,I}$.
$\retrievef$ evaluates identically over the agentic context store:
\[
\frac{\dom{A}_{t,I} = \alpha \qquad
      w = \mathsf{read}\big(\mathrm{store}(I),\, \{\text{entries with } \av \le \alpha\}\big)}
     {I \vdash_t \asoff(e_{\text{store}}, A) \Downarrow \langle w,\alpha\rangle}
\quad(\rulename{E-Asof})
\]
A decision at epoch $t$ evaluates its body \emph{at that same epoch} and
emits the result:
\[
\frac{I \vdash_t e \Downarrow \langle v,\alpha\rangle}
     {I \vdash_t \decidef_t(e) \Downarrow \langle v,\alpha\rangle}
\quad(\rulename{E-Decide})
\]
and $\letf$ evaluates its bound term, extends the assignment for the
bound variable, and evaluates the body:
\[
\frac{I \vdash_t e_1 \Downarrow r_1 \qquad
      I[x \mapsto r_1] \vdash_t e_2 \Downarrow r_2}
     {I \vdash_t \letf\ x = e_1\ \inkw\ e_2 \Downarrow r_2}
\quad(\rulename{E-Let})
\]
The reference-time versus availability distinction is now operational:
the \emph{position} at which a value sits in a series, and the reference
period a datum is \emph{about}, are recorded in its value/position; the
\emph{availability} $\alpha$ records only when it becomes knowable, and
it is $\alpha$---not the reference time---that admissibility and the
whole discipline turn on. A datum about period $t{-}1$ that becomes
knowable only at $t{+}5$ has availability $t{+}5$, and using it in a
decision at $t$ is a look-ahead violation despite its earlier reference
time. This is the confusion the calculus is designed to make
impossible to express accidentally within the fragment.

\subsection{Agreement Up to an Epoch}
\label{sec:sem-agree}

Look-ahead-freedom compares two executions that are identical on the
admissible past and may differ arbitrarily on the future. We name that
relation on inputs.

\begin{definition}[Agreement up to $t$]
\label{def:agree}
Two input assignments $I$ and $I'$ \emph{agree up to epoch $t$}, written
$I \approx_t I'$, if they assign identical time-indexed values to every
base datum, and identical entries to every store entry, whose
availability is $\le t$:
\[
  I \approx_t I' \quad\stackrel{\text{def}}{\iff}\quad
  \forall\, \text{base data and store entries } b:\;
  \av(b) \le t \;\Rightarrow\; I(b) = I'(b).
\]
No constraint is placed on data with availability $> t$: the two
assignments may differ freely on the future relative to $t$.
\end{definition}
$\approx_t$ is an equivalence relation for each fixed $t$, and it is
monotone in the epoch: $t' \le t$ and $I \approx_t I'$ imply
$I \approx_{t'} I'$ (agreeing on a larger admissible set implies
agreeing on a smaller one). When we say a datum is \emph{pinned} by
$\approx_t$, we mean its availability is $\le t$, so
Definition~\ref{def:agree} forces the two runs to assign it the same
time-indexed value---the fact used pervasively in the soundness proof.

\subsection{Look-Ahead-Freedom}
\label{sec:sem-laf}

We can now state the property. Informally: a pipeline is look-ahead-free
if no decision it emits at any epoch $t$ can be changed by altering data
that only becomes available after $t$. Formally, it is a two-run
non-interference statement over the time-indexed partition induced by
$t$: the ``high'' (protected) inputs are the data with availability
$> t$, the ``low'' (observable) output is the decision at $t$, and
non-interference is the requirement that the low output is a function of
the low inputs alone.

\begin{definition}[Look-ahead-freedom]
\label{def:laf}
A pipeline term $e$ is \emph{look-ahead-free} if for every epoch $t$,
every $\decidef_t$ subterm of $e$ with body $e'$, and every pair of
input assignments $I \approx_t I'$,
\[
  I \vdash_t \decidef_t(e') \Downarrow \langle v,\alpha\rangle
  \ \text{ and }\
  I' \vdash_t \decidef_t(e') \Downarrow \langle v',\alpha'\rangle
  \ \Longrightarrow\ v = v'.
\]
That is, the \emph{value} emitted by the decision at $t$ is invariant
under any perturbation of the data whose availability exceeds $t$.
\end{definition}
The property is stated on the decision's emitted \emph{value}, since the
value is the observable a downstream consumer acts on and the only thing
a future datum could corrupt; the decision's stamp is $t$ by
\rulename{E-Decide} and so agrees across the two runs unconditionally.
This is the standard non-interference posture---the observable low output
is a function of the low inputs alone---and it is the exact property the
two-run logical relation of Section~\ref{sec:soundness} was built to
establish, where value agreement at effect $\eff\le t$ is precisely
$v=v'$ here.

This is the property Theorem~\ref{thm:soundness} certifies for accepted
terms of $\Frag$, and the property Theorem~\ref{thm:undecidable}
(Section~\ref{sec:decidability}) shows is undecidable on the full
language $\Full$.

\section{The Undecidability Boundary}
\label{sec:decidability}

Theorem~\ref{thm:soundness} gives a sound, decidable check on the
fragment $\Frag$. We now show this is the best possible in a precise
sense: on the full language $\Full$, where availability may be computed
from data values (Definition~\ref{def:full}), look-ahead-freedom is
undecidable. The fragment restriction is therefore not a convenience but
the exact frontier of decidability.

\begin{theorem}[Undecidability on the full language]
\label{thm:undecidable}
Look-ahead-freedom (Definition~\ref{def:laf}) is undecidable for terms of
$\Full$. More precisely, the set of look-ahead-free terms of $\Full$ is
$\Pi^0_1$-complete: non-look-ahead-freedom is recursively enumerable---a
leak is a finite witness---while look-ahead-freedom is not.
\end{theorem}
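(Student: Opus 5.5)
The proof has two parts: an upper bound placing non-look-ahead-freedom in $\Sigma^0_1$, and a many-one reduction from the complement of the halting problem establishing $\Pi^0_1$-hardness.

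\emph{Upper bound.} We show that a leak has a finite witness. By Definition~\ref{def:laf}, $e$ fails to be look-ahead-free exactly when there exist an epoch $t$, a $\decidef_t$ subterm with body $e'$, and inputs $I\approx_t I'$ for which the two evaluations emit values $v\neq v'$. We restrict attention to a countable, effectively enumerable presentation: $\mathbb{T}$ is taken to be $\mathbb{Z}$ (or any computably presented order), data values are finitely representable, and value operations are computable, as Assumption~\ref{ass:pure} permits. Under this presentation only the finitely many base data and store entries actually consulted by the two derivations matter. So a witness is a finite object $(t, e', I|_{\text{fin}}, I'|_{\text{fin}})$ together with the two finite big-step derivations. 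Dovetailing over all candidates and running the evaluations enumerates exactly the leaking terms, so non-look-ahead-freedom is r.e.

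\emph{Hardness.} Given a Turing machine $M$, we computably build $e_M\in\Full$ that is look-ahead-free iff $M$ does not halt on empty input. The idea is to use a value-dependent availability term, as in Definition~\ref{def:full}. Take one base datum $x=\langle d, t+1\rangle$, which is strictly in the future at epoch $t$. Take a total, pure value operation $h_M(n)$ that returns $1$ if $M$ halts within $n$ steps and $0$ otherwise; $h_M$ is primitive recursive, so it satisfies Assumption~\ref{ass:pure}. Let $n$ be read from an admissible datum $y=\langle n, t\rangle$. Then set
\[
e_M \;=\; \decidef_t\big(\stampf(\val(x),\ e_A)\big),\qquad e_A=\mathsf{if}\ h_M(\val(y))=1\ \mathsf{then}\ t\ \mathsf{else}\ t+1,
\]
and use as-of reads so that the decision exposes $\val(x)$ only when the stamp argument certifies admissibility. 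Concretely, I would instead use $\asoff(e_{\text{store}}, e_A)$ over a store holding one entry of availability $t+1$, so that the read returns that entry's value iff $e_A$ evaluates to $t+1$, and returns a fixed default otherwise. The branch is then arranged so that the future entry is visible exactly when $h_M(\val(y))=1$.

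We check both directions. If $M$ halts in $n_0$ steps, choose $I\approx_t I'$ that set $y=n_0$ in both runs but differ on the future store entry. The emitted values then differ, so $e_M$ leaks. If $M$ never halts, then $h_M\equiv 0$, the future entry is never read, and every run emits the default, so $e_M$ is look-ahead-free. Hence $\{e : e \text{ look-ahead-free}\}\ge_m \overline{K}$, which gives $\Pi^0_1$-hardness, and together with the upper bound, $\Pi^0_1$-completeness. Undecidability follows.

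\emph{Main obstacle.} The delicate step is making the reduction faithful to the semantics actually fixed in Section~\ref{sec:semantics}. That semantics requires two things. First, the conditional $\mathsf{if}$ must be expressible as a $\Full$ term of sort $\Time$. Second, as-of evaluation (\rulename{E-Asof}) must be the sole place where availability gates \emph{which} value is read. With a plain $\stampf$ the value flows regardless of the stamp, so the leak would exist unconditionally and the reduction would be trivial and wrong. I will therefore route the dependence through $\asoff$ (or $\retrievef$), where the data-computed stamp genuinely controls visibility. I will also check that the default value is independent of the future entry, so that the non-halting case really is leak-free.
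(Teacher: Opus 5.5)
Your proof is correct, and it departs from the paper's in two places. The paper simulates $M$ along the epoch axis with a causal fold $\scanf(\mathsf{step},\mathit{cfg}_0,\bar\iota)$ and reads off $h_t=\mathsf{halted}(\mathit{cfg}_t)$. That is why it needs countably infinite series (M3). You instead apply the primitive-recursive bounded-halting test $h_M$ to an admissible input datum $y$. So no series are needed, and when $M$ halts the leak is witnessed at any epoch $t$ by setting $\val(y)\ge n_0$ in both runs. The paper's version shows the halting signal emerging causally from admissible data over time; yours is more elementary and isolates value-conditioned availability as the only ingredient.

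The second difference matters more. The paper gates the future datum with $\stampf$, taking $e_M=\decidef_t(g(\val(\stampf(d^{\uparrow},A_t))))$. Your ``main obstacle'' paragraph says exactly why that fails as written. By \rulename{E-Stamp}, the value of $d^{\uparrow}$ passes through unchanged whatever $A_t$ evaluates to. And $d^{\uparrow}$ has availability $t{+}1$, so Definition~\ref{def:agree} never pins it. With $g$ non-constant, that term therefore leaks at every epoch even when $M$ diverges, and the paper's non-halting direction does not hold under Definition~\ref{def:laf}. It is the same laundering channel that \rulename{T-Stamp} exists to block. Routing the computed stamp through $\asoff$, where \rulename{E-Asof} genuinely restricts which store entries are visible, is what makes the reduction work. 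Your deviation is therefore a repair, not a stylistic choice. Your upper bound also states explicitly the effectiveness assumptions the paper's r.e.\ argument uses tacitly: computable value operations and a computably presented $\mathbb{T}$.

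Two details need fixing in the write-up. First, as displayed, $e_A=\mathsf{if}\ h_M(\val(y))=1\ \mathsf{then}\ t\ \mathsf{else}\ t{+}1$ makes the availability-$(t{+}1)$ entry visible exactly when $h_M=0$, so $e_M$ would leak for every $M$. Swap the branches, and drop the unused datum $x$ and the abandoned $\stampf$ variant. Second, the divergent case must cover arbitrary stores, because Definition~\ref{def:laf} quantifies over all $I\approx_t I'$. The argument is not that every run emits the default. It is that, with $e_A$ evaluating to $t$, both runs read only entries with $\av\le t$, all pinned by $\approx_t$, and so select the same value.
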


The proof is a many-one reduction from the halting problem. We first fix
the modelling assumptions the encoding uses; each concerns $\Full$, the
object whose undecidability we characterise, not the fragment $\Frag$
where real pipelines live.

\begin{description}
  \item[\textnormal{(M1)}] \emph{Value-conditioned availability.} $\Full$
    admits, as the availability argument of a stamping construct, a
    pipeline term of sort $\Time$ computed from data values---in
    particular a total conditional such as
    $\mathsf{if}\ \val(e){>}c\ \mathsf{then}\ \tau_1\ \mathsf{else}\
    \tau_2$. This is exactly the extension defining $\Full$
    (Definition~\ref{def:full}), whose own illustrative instance is a
    value-conditioned $\Time$ term.
  \item[\textnormal{(M2)}] \emph{Configurations as values.} A value
    ($\Val$) may hold a finite string, so a Turing-machine
    configuration---control state, head position, and finite tape---is a
    $\Val$, and one step of the machine is a total, pure function
    $\Val\to\Val$, consistent with Assumption~\ref{ass:pure}. The step
    function is total; we never ask a value operation to decide halting,
    only to advance one step.
  \item[\textnormal{(M3)}] \emph{Countable position domains.} Series in
    $\Full$ may be countably infinite, with position domain an initial
    segment of $\mathbb{N}$ of unbounded length. Because every operator
    is causal and pointwise (Section~\ref{sec:series-opsem})---$\scanf$
    depends only on positions $\preceq p$, $\windowf$ on a bounded
    suffix, and we use no unbounded-lookback resample---each element
    $\bar e[p]$ is finitely determined even when the series is infinite.
    No evaluation ever forces a completed infinite series; the infinity
    resides solely in the epoch quantifier of
    Definition~\ref{def:laf}.
\end{description}

\begin{proof}
Let $M$ be a Turing machine (run on blank input; the general case is
identical). We construct a term $e_M\in\Full$ with
\[
  M \text{ halts} \quad\Longleftrightarrow\quad
  e_M \text{ is \emph{not} look-ahead-free},
\]
which reduces the halting problem to non-look-ahead-freedom and hence
establishes undecidability.

\emph{Simulating $M$ with a causal fold.} Identify the position domain
with the epoch grid: position $p$ corresponds to epoch $p$, and data at
position $p$ has availability $p$. By (M2), let $\mathit{cfg}_0$ be the
initial configuration and $\mathsf{step}:\Val\to\Val$ the total one-step
function. The causal fold
\[
  \bar c \;=\; \scanf(\mathsf{step},\, \mathit{cfg}_0,\, \bar\iota)
\]
over a driver series $\bar\iota$ (a base series with one element per
position, availability $p$ at position $p$) computes, by
\rulename{E-Scan-Init}/\rulename{E-Scan-Step}, the configuration
$\mathit{cfg}_p = \mathsf{step}^{p}(\mathit{cfg}_0)$ at each position
$p$. By causality (Lemma~\ref{lem:causality}), $\bar c[p]$ depends only on
positions $\preceq p$, so $\mathit{cfg}_p$ is a function of data available
by epoch $p$; its availability stamp is $p$. Thus $\mathit{cfg}_p$ is
\emph{admissible at epoch $p$}.

\emph{A halting predicate over admissible data.} Let
$\mathsf{halted}:\Val\to\mathsf{Bool}$ be the total pure predicate true
exactly on halting configurations (M2). The value
$h_t = \mathsf{halted}(\mathit{cfg}_t)$ is computed from $\mathit{cfg}_t$,
which is admissible at epoch $t$; hence $h_t$ is itself admissible at
epoch $t$---it reads no future data. Note $h_t$ is monotone in $t$ once
true (a halting machine stays halted), and $h_t=\mathit{false}$ for all
$t$ iff $M$ never halts.

\emph{Gating a genuinely future datum by the halting predicate.}
Introduce a base datum $d^{\uparrow}$ with a genuinely future
availability---formally, at epoch $t$ its natural availability is
$t{+}1>t$, so it is inadmissible at $t$. Using the value-conditioned
availability of (M1), re-stamp $d^{\uparrow}$ with
\[
  A_t \;=\; \mathsf{if}\ h_t\ \mathsf{then}\ t\ \mathsf{else}\ t{+}1,
  \qquad
  d^{\star} = \stampf(d^{\uparrow}, A_t),
\]
and feed its value into the decision at $t$:
\[
  e_M \;=\; \decidef_t\big(\, g(\val(d^{\star}))\,\big),
\]
where $g$ is any value operation with $g$ non-constant in its argument
(so the datum's value genuinely influences the output). The construction
is uniform in $t$: $e_M$ emits a decision at every epoch.

\emph{Correctness of the reduction.} Fix an epoch $t$ and consider two
input assignments $I\approx_t I'$ that differ only on $d^{\uparrow}$
(legitimate, since $d^{\uparrow}$ has availability $>t$ under any
assignment where it is not re-stamped down).

If $M$ halts, let $T$ be its halting step. At epoch $t=T$, $h_T=
\mathit{true}$, so $A_T = T$: the re-stamp marks $d^{\star}$ as available
\emph{now}, its value is admissible, and it flows through $g$ into
$\decidef_T$. Perturbing $d^{\uparrow}$ (a datum with natural
availability $>T$) changes $\val(d^{\star})$ and hence the emitted value,
so there exist $I\approx_T I'$ with differing decisions: $e_M$ is not
look-ahead-free.

If $M$ never halts, then $h_t=\mathit{false}$ for every $t$, so
$A_t=t{+}1>t$ at every epoch: $d^{\star}$ is inadmissible at every
decision, its value never influences any $\decidef_t$, and every decision
depends only on admissible data. Hence for all $t$ and all $I\approx_t
I'$ the emitted values coincide: $e_M$ is look-ahead-free.

Therefore $M$ halts iff $e_M$ is not look-ahead-free, completing the
reduction. Since the halting problem is undecidable, so is
look-ahead-freedom on $\Full$.

\emph{Complexity.} Non-look-ahead-freedom is recursively enumerable: a
leak is witnessed by a finite object---an epoch $t$, and two assignments
$I\approx_t I'$ differing on some datum of availability $>t$ that yields
different decisions at $t$---and each candidate witness is checkable in
finite time by evaluating the two runs up to $t$, which by (M3) touches
only finitely much of the series. Enumerating epochs and finite
perturbations therefore semi-decides non-look-ahead-freedom, placing it
in $\Sigma^0_1$ and look-ahead-freedom in $\Pi^0_1$. The reduction above
maps halting (which is $\Sigma^0_1$-complete) to non-look-ahead-freedom,
so the latter is $\Sigma^0_1$-hard and look-ahead-freedom is
$\Pi^0_1$-hard; with the upper bound, look-ahead-freedom is
$\Pi^0_1$-complete.
\end{proof}

Two consequences frame the contribution. First, the asymmetry is
practically exact: a sound analysis can always \emph{report a leak} it
finds (leaks are r.e.), but no analysis can \emph{certify freedom} in
general (freedom is properly $\Pi^0_1$)---which is precisely why a
differential tester such as DataFlow can flag violations it happens to
exercise yet cannot guarantee their absence, and why the fragment
restriction is the route to a soundness guarantee rather than a
best-effort test. Second, the reduction pinpoints the single feature
responsible: value-conditioned availability (M1). Removing it---confining
availability to the value-independent sublanguage $\Av$---is exactly the
passage from $\Full$ to $\Frag$, and it is what turns an undecidable
property into the linear-time check of Theorem~\ref{thm:decidable-check}.
The boundary is not incidental; it is drawn by one syntactic capability.

\section{Evaluation}
\label{sec:evaluation}

The formal results establish that the discipline is sound and decidable
in linear time. This section asks three empirical questions that the
theory alone does not answer. First, does the checker's cost actually
scale linearly in practice, as Theorem~\ref{thm:decidable-check}
predicts, and where does the availability-term complexity enter
(\S\ref{sec:eval-scaling})? Second, does soundness---an ACCEPT verdict
never being falsifiable---hold on proprietary market data, as judged by an independent
oracle (\S\ref{sec:eval-oracle})? Third, does the sound checker detect
leaks that the empirical detectors in current practice miss, and at what
false-positive cost (\S\ref{sec:eval-comparison})?

All experiments use a clean-room implementation of the
checker over the calculus of Sections~\ref{sec:calculus}%
--\ref{sec:series-hardening}, with no proprietary code, market
data, or strategy content. The oracle validation
(\S\ref{sec:eval-oracle}) additionally runs against archetypes derived
from proprietary market data that we are not permitted to redistribute;
the reported figures for that experiment therefore come from data readers
cannot obtain. The publicly released artifact reproduces the method, not
those figures: it ships the checker, the detectors, the oracle, the
adversarial corpus, and synthetic stand-ins for the archetypes that
exercise the same leak mechanisms, so that every qualitative claim below
can be re-run end to end (see the Data and Code Availability statement).
Every reported quantity is produced by an
executed experiment against a committed revision; we mark computed
figures accordingly and fix all seeds. We follow a pre-registered
analysis plan committed before the statistical code was written, and
report deviations from it explicitly. Two scope caveats apply
throughout. The comparison corpus (\S\ref{sec:eval-comparison}) is
adversarially constructed to stress detectors, so its rates are
existence and robustness evidence over a hard corpus, not estimates of
real-world leak frequency. The real-data validation
(\S\ref{sec:eval-oracle}) uses an independent dynamic oracle whose
witnesses establish leaks but whose silence does not establish freedom
(freedom is not r.e., by Theorem~\ref{thm:undecidable}).

\subsection{Scaling of the Checker (Claim A)}
\label{sec:eval-scaling}

Theorem~\ref{thm:decidable-check} states that acceptance is decidable in
time linear in term size when availability-term complexity is bounded.
We test this along two axes: the number of pipeline nodes at bounded
availability complexity (Axis~1), and the availability-term complexity
at fixed node count (Axis~2). Timing uses a monotonic high-resolution
clock, five discarded warm-up runs followed by fifty timed repetitions
per size, with pipeline construction excluded from the measured region
and a $5{,}000$-draw bootstrap confidence interval on the median; the
fitted quantity is the ordinary-least-squares slope of $\log(\text{time})$
on $\log(\text{size})$, with a bootstrap interval over resampled
repetitions.

\emph{Axis~1: node count.} Over pipeline sizes from $10$ to $10^5$
nodes at bounded availability complexity, the fitted log--log slope is
$1.023$ (95\% bootstrap CI $[1.021, 1.026]$)
(Figure~\ref{fig:scaling}a). A slope of unity is linear scaling; the
measured slope is within three percent of unity across four orders of
magnitude, confirming the linear-time prediction of
Theorem~\ref{thm:decidable-check}.

We report one methodological point that materially affected this result,
in the interest of reproducibility. An initial implementation of the
availability normal-form computation performed a quadratic
tuple-concatenation fold, yielding a fitted slope of $1.39$---%
super-linear, and at first reading an apparent tension with the theorem.
Profiling localised the cost to the normalisation fold, not the
type-checking pass; replacing the fold with a linear accumulation
(a verdict-preserving change, with the full test suite green before and
after) restored the slope to $1.023$. The theorem predicts linear
behaviour of the \emph{algorithm}; the benchmark first measured an
\emph{implementation} that departed from it, and then, once corrected,
confirmed it. We keep the episode in the record because it is exactly
the kind of gap between a proof about an algorithm and the artifact that
realises it that an empirical scaling check exists to catch.

\emph{Axis~2: availability-term complexity.} Holding node count fixed
and growing the availability term as a balanced alternating
$\min$/$\max$ tree, check time grows smoothly through small depths (from
$1.35\times10^{-5}$\,s at $2$ literals to $5.6\times10^{-4}$\,s at $32$
literals) and then sharply---$4.5\times10^{-2}$\,s at $64$ literals and
$1.1\times10^{-1}$\,s at $128$ literals. This
inflection is not an implementation defect: it is the intrinsic cost of
normalising an alternating $\min$/$\max$ availability term, whose normal
form can be large, and it is the reason
Theorem~\ref{thm:decidable-check} states linearity \emph{under bounded
availability-term complexity} rather than unconditionally
(Figure~\ref{fig:scaling}b). The two axes
together give the honest decomposition: the checker is linear in the
structural size of the pipeline, and the only super-linear regime lies
in adversarially deep availability expressions, which do not arise in
the point-in-time pipelines the fragment is designed for---base data,
lags, and $\max$-of-sources joins produce shallow availability terms.
The complexity of the check is thus governed by how convoluted a
pipeline's \emph{availability structure} is, not by its data-flow size,
and for realistic pipelines the former is small. This observation also
motivates the fragment's restriction to fixed-arity $\min$/$\max$: it is
what keeps availability normal forms, and therefore the effect-ordering
decision, tractable.

\begin{figure}[t]
  \centering
  \includegraphics[width=\linewidth]{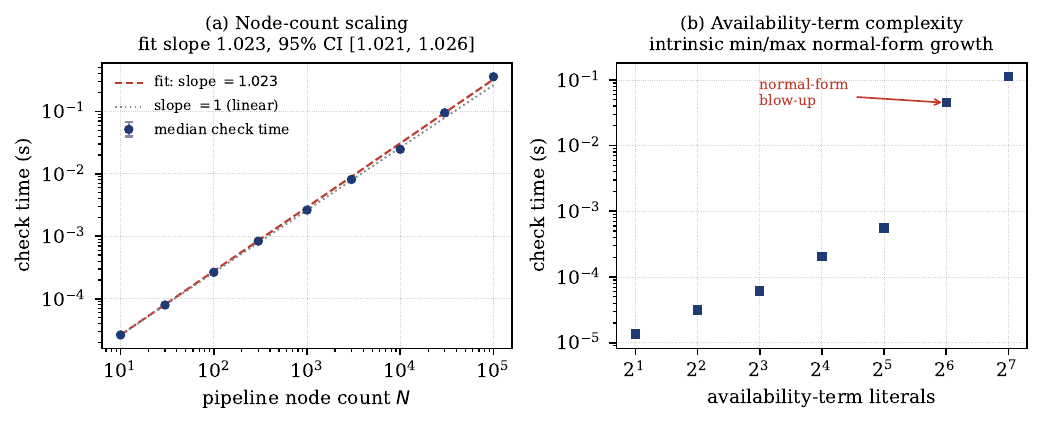}
  \caption{Checker scaling. (a) Node-count
  scaling at bounded availability complexity: median check time versus
  pipeline node count on log--log axes, with a fitted slope of $1.023$
  (95\% bootstrap CI $[1.021, 1.026]$) against the linear
  ($\text{slope}=1$) reference; error bars are $5{,}000$-draw bootstrap
  CIs on the median. (b) Availability-term complexity at fixed node
  count: check time versus the number of literals in a balanced
  alternating $\min$/$\max$ availability term, exhibiting the intrinsic
  normal-form blow-up that the fragment's fixed-arity restriction is
  designed to avoid. Together the panels show the check is linear in
  data-flow size and super-linear only in adversarially deep
  availability structure, which does not arise in point-in-time
  pipelines.}
  \label{fig:scaling}
\end{figure}

\subsection{Soundness on Real Data (Oracle Validation)}
\label{sec:eval-oracle}

Claim~A concerns cost and Claim~B concerns detection against other
tools; neither directly tests the property the theory actually
guarantees---that an \textsc{accept} verdict is never wrong. We test
that here against proprietary market data, using an oracle that decides leak-reality
independently of the checker. The oracle is a dynamic two-run test: at a
sampled epoch $t$, it computes the decision on the data as stored, then
recomputes it after perturbing every datum whose availability exceeds
$t$; if the two decisions differ, the future has influenced the present
and a leak is \emph{witnessed}. This is a direct operationalisation of
Definition~\ref{def:laf} at the sampled epochs, and it shares no code
with the static checker. The soundness question is then precise: does
any pipeline the checker \textsc{accept}s admit an oracle witness? If so,
the checker is unsound; if not, every acceptance withstands an
independent dynamic falsification attempt.

We run the oracle over five archetypes drawn from proprietary market
data (not redistributable), each in a clean and one or more leaky
variants: AR2
(fundamentals, with a reference-time leak and a restatement-vintage
leak), AR3 (weekly releases), AR4 (intraday bars aggregated at a session
close), and AR5 (a survivorship-filtered universe). These span the leak
families that occur in practice---reference-versus-availability
misalignment, restatement vintages, post-close aggregation, and
survivorship. We report only verdicts, witness existence, and decision
deltas; the released artifact substitutes synthetic stand-ins with the
same schemas and leak mechanisms for these archetypes.

Table~\ref{tab:oracle} gives the result. Every
checker \textsc{accept}---AR2, AR3, AR4, and AR5 in their clean
variants---has no oracle witness: the soundness cross-check passes with
no accepted pipeline falsifiable by the oracle. Every leaky variant is
rejected by the checker. For four of the five leaky variants (AR2
reference, AR2 restatement, AR4, AR5) the oracle independently witnesses
the leak, confirming the rejection dynamically; the AR2 restatement case
is notable, as the restatement-vintage leak is one of the subtlest in
practice and the checker catches it through the \rulename{T-Asof} and
\rulename{T-Stamp} discipline.

The AR3 leaky variant is the instructive exception and we call it out
explicitly: the checker rejects it, but the oracle finds no witness at
its sampled epochs. This is not a soundness violation---soundness
constrains \textsc{accept}, and the checker did not accept---but a case
where the sound static analysis catches a real leak (real by the term's
semantics: the alignment reads a value whose availability follows the
decision epoch) that the dynamic oracle fails to exercise, because the
weekly release lag is short and the sampled epochs do not happen to flip
the selected row. It is the same phenomenon as the coverage misses of
Claim~B, now arising on proprietary market data and against a dynamic oracle rather
than a detector: a two-run test certifies nothing by its silence, exactly
as Theorem~\ref{thm:undecidable} predicts, whereas the static checker's
rejection is driven by the term's availability structure regardless of
which epochs a tester samples. The static and dynamic views thus agree
wherever the dynamic view can see, and where they diverge it is the
static, sound analysis that is correct.

\begin{table}[t]
  \caption{Real-data oracle validation. For each
  archetype and variant: the checker verdict, whether the independent
  two-run oracle witnessed a leak, and whether the two agree. No checker
  \textsc{accept} has an oracle witness (soundness holds); the AR3 leaky
  variant is a conservative rejection the oracle does not witness (see
  text). Seed $20260702$; captured at commit \texttt{fcc6a41}.}
  \label{tab:oracle}
  \centering
  \small
  \begin{tabular}{llccl}
    \toprule
    Archetype & Variant & Checker & Oracle witness? & Agreement \\
    \midrule
    AR2 & clean             & \textsc{accept} & no  & yes \\
    AR2 & leaky reference   & \textsc{reject} & yes & yes \\
    AR2 & leaky restatement & \textsc{reject} & yes & yes \\
    AR3 & clean             & \textsc{accept} & no  & yes \\
    AR3 & leaky             & \textsc{reject} & no  & yes (conservative) \\
    AR4 & clean             & \textsc{accept} & no  & yes \\
    AR4 & leaky             & \textsc{reject} & yes & yes \\
    AR5 & clean             & \textsc{accept} & no  & yes \\
    AR5 & leaky             & \textsc{reject} & yes & yes \\
    \bottomrule
  \end{tabular}
\end{table}

\subsection{Detection versus the Empirical Detectors (Claim B)}
\label{sec:eval-comparison}

The practical claim motivating the whole development is that a sound
static check catches leaks that the empirical detectors in current use
miss. We test this on an adversarial corpus of $53$ pipelines with
ground truth fixed by construction: $33$ genuinely leaking (\textsc{leak})
and $20$ genuinely clean (\textsc{clean}). Each leak case carries an
explicit witnessing perturbation---an input and epoch at which the
decision provably changes---so ground truth is established independently
of every tool under test. We compare three tools: the sound checker of
this paper; a \emph{two-run} differential detector that re-executes the
pipeline on perturbed future data at sampled epochs and flags a change;
and a \emph{tiling} detector that compares outputs across shifted data
windows. The two-run detector is the design underlying differential
leakage testers such as DataFlow; the tiling detector represents
window-shift consistency checks.

\emph{The corpus.} The corpus is organised into six mechanisms, each
targeting a specific way a leak evades a detector or a clean pipeline
trips a sound checker. Three produce leaks: \textbf{M1} (coverage miss)
places the leak at an epoch falling between the two-run detector's
sampled epochs, so a detector that samples the time axis can step over
it; \textbf{M2} (perturbation-insensitivity miss) reads a future value
through a threshold operation that the detector's default perturbation
does not flip on the tested input, though a different input would;
\textbf{M3} (insidious re-stamp miss) re-stamps a future-valued series
to an admissible stamp, the empirical form of the \rulename{T-Stamp}
channel---an output-comparing detector sees an unchanging value and
misses it, while the effect system preserves the future effect through
the re-stamp and rejects. Three produce clean pipelines that probe the
checker's precision: \textbf{FP1} (algebraic cancellation) subtracts a
future value from itself; \textbf{FP2} (dominated future) applies an
operation that ignores its future argument; both are clean, and the
checker rejects them because it treats value operations as opaque---the
documented relative-completeness ceiling of
Proposition~\ref{prop:rel-complete}. The sixth, \textbf{FP2b}
(availability-structure domination), declares an availability of
$\min(t, t{+}1)$, which the effect normal form correctly reduces to $t$:
these cases are clean and the checker \emph{accepts} them, testing
whether the checker over-rejects on availability structure. Of the $53$
cases, $52$ are synthetic terms over the public DSL; one is an anchor
drawn from proprietary market data (a weekly-release alignment) included
to exhibit the same detector-miss phenomenon on real market data. In the
released artifact this anchor is replaced by a synthetic term with the
same alignment structure.

\emph{Results.} Table~\ref{tab:detection} reports false-negative rates
over the $33$ leak cases and false-positive rates over the $20$ clean
cases. The sound checker has a false-negative rate
of $0$: it misses none of the $33$ leaks. Because this is a
zero-event arm, we report it with the rule-of-three upper bound, giving
a false-negative rate of $0$ with 95\% upper bound $9.1\%$. The two-run
detector misses $18$ of $33$ leaks ($54.5\%$) and the tiling detector
misses all $33$ ($100\%$); both differences against the checker are
large. On the clean cases, the checker false-positives on $15$ of $20$
($75\%$)---exactly the FP1 and FP2 cases, where value semantics it
cannot inspect render an opaque rejection---while correctly accepting
all five FP2b cases, confirming that the false positives are confined to
opaque value operations and that the effect normal form does not
over-reject on availability structure. The two empirical detectors
false-positive on none of the clean cases, as expected of tools that
only flag observed output changes.

\begin{table}[t]
  \caption{Detection over the adversarial corpus:
  false-negative rate over $33$ \textsc{leak} cases and false-positive
  rate over $20$ \textsc{clean} cases. The checker's zero false-negative
  arm is reported with its rule-of-three upper bound. Confidence
  intervals are $5{,}000$-draw cluster bootstraps over the six
  mechanisms; the wide intervals reflect the small number of mechanism
  clusters, not instability of the underlying rates (see text).}
  \label{tab:detection}
  \centering
  \small
  \begin{tabular}{llrrl}
    \toprule
    Tool & Metric & Events/$n$ & Rate & 95\% CI \\
    \midrule
    Checker  & FN & $0/33$  & $0.0\%$   & $[0,\,9.1\%]$ (rule of three) \\
    Two-run  & FN & $18/33$ & $54.5\%$  & wide (see text) \\
    Tiling   & FN & $33/33$ & $100.0\%$ & $[100,\,100]\%$ \\
    \midrule
    Checker  & FP & $15/20$ & $75.0\%$  & FP1, FP2 only; FP2b accepted \\
    Two-run  & FP & $0/20$  & $0.0\%$   & $[0,\,0]\%$ \\
    Tiling   & FP & $0/20$  & $0.0\%$   & $[0,\,0]\%$ \\
    \bottomrule
  \end{tabular}
\end{table}

\emph{Statistical analysis.} We follow the pre-registered analysis plan.
The primary endpoints are the pairwise false-negative differences
between each detector and the checker; secondary endpoints are the
false-positive comparison and non-parametric robustness. Because the
corpus is clustered by mechanism, inference uses a cluster bootstrap and
a cluster-respecting permutation test rather than treating cases as
independent. For the two dominance comparisons, the checker misses
strictly fewer leaks than the two-run detector (difference $54.5$
percentage points) and than the tiling detector ($100$ points); a
$10{,}000$-draw permutation test rejects equality in both cases, and
because the checker arm is perfect the permutation statistic sits at its
resolution floor, so we report $p < 10^{-4}$ (permutation-limited) with
the number of discordant pairs ($18$ and $33$ respectively) rather than
a spuriously precise point value. A Benjamini--Hochberg correction over
the family of comparisons leaves all adjusted $q$-values at $10^{-4}$. A
Wilcoxon signed-rank test on paired per-case correctness agrees
($p < 10^{-3}$ for each comparison). Post-hoc power for the
checker-versus-two-run comparison is $76.7\%$ at the observed effect;
this is below the conventional $80\%$ and we report it as such, noting
that the small corpus limits power and that the permutation and Wilcoxon
tests independently corroborate the effect; the checker-versus-tiling
comparison is at maximal effect. We stress the scope caveat from the
plan: the corpus is adversarial, so these rates are existence and
robustness evidence---leaks of these kinds exist that defeat the
detectors and are caught by the checker---and not estimates of leak
frequency in naturally occurring pipelines. Two pre-registered
deviations are recorded: subgroup bootstraps resample within the leaking
or clean mechanisms respectively (resampling all six for a subgroup rate
yields empty pools), and the bias-corrected interval falls back to a
percentile interval on the boundary arms; neither changes the endpoints.

The comparison and the oracle validation tell one story from two sides.
Both the tiling detector on the corpus and the two-run oracle on the
real-data AR3 archetype (\S\ref{sec:eval-oracle}) miss leaks that the
sound checker catches; the checker's cost is a set of opaque-value-op
false positives whose extent is exactly characterised
(Proposition~\ref{prop:rel-complete}) and empirically confined to the
FP1/FP2 mechanisms. A sound over-approximation trades false alarms for
the guarantee that an accepted pipeline is genuinely leak-free---the
guarantee the empirical detectors, by Theorem~\ref{thm:undecidable},
cannot provide.

\section{Discussion}
\label{sec:discussion}

\paragraph{What the identification buys.} The technical content of this
paper follows from a single move: reading look-ahead-freedom as
non-interference over a time-indexed lattice. The move is worth
restating because its payoff is not cosmetic. Once the correspondence is
in place, results that would each be substantial to establish from
scratch arrive as instances of known theory. That freedom is undecidable
in general is not a surprise to be discovered empirically but a
consequence of the timed non-interference frontier, transported through
data-dependent availability. That a sound static discipline exists on a
value-independent fragment is the temporal reading of the classical fact
that information flow is certifiable by a lattice-respecting type or
effect
system~\cite{denning1977certification,volpano1996sound,lucassen1988polymorphic,nielson1999principles}.
That re-stamping is the delicate case is the temporal reading of
controlled declassification and intransitive
flow~\cite{sabelfeld2009declassification,rushby1992noninterference}: a
re-stamp is a permitted relabelling channel whose permission must not be
allowed to launder the future value it carries, and the \rulename{T-Stamp}
rule is exactly the discipline that grants the one without the other.
The contribution is not to reinvent this machinery but to recognise that
the problem the quantitative-evaluation community has been managing by
hand is the problem this machinery was built to solve.

\paragraph{Soundness is the point, and it has a price.} The empirical
comparison (\S\ref{sec:eval-comparison}) is best read not as a claim that
the checker is uniformly ``better'' than a differential detector but as a
demonstration of what a soundness guarantee is and costs. The two-run
and tiling detectors never raise a false alarm on our corpus, because
they only ever report an output change they actually observed; and they
miss $54.5\%$ and $100\%$ of the planted leaks respectively, because a
leak invisible at the sampled epochs or under the tested perturbation
leaves the observed output unchanged. The checker inverts both
properties. It misses no leak, because it reasons about every
availability path rather than the epochs a tester happens to visit; and
it pays for this with false positives that are not incidental but
\emph{characterised}: they occur exactly when a semantically clean result
depends on a value operation the analysis must treat as opaque---an
algebraic cancellation of a future term against itself, or an operation
that provably ignores its future argument
(Proposition~\ref{prop:rel-complete}). This is the generic situation of
any sound static analysis of a semantic property: undecidability forces
over-approximation, and over-approximation shows up as conservative
rejections on a delimited class of inputs. The design question is not
whether to pay this price but whether its extent is known and
acceptable, and here it is both: the false positives are confined to
opaque value arithmetic over future-referenced data, a pattern that a
practitioner can recognise and, where a clean pipeline is wrongly
rejected, discharge by a local annotation. A detector's silence, by
contrast, is uninformative by construction, and no amount of additional
sampling converts it into a guarantee (Theorem~\ref{thm:undecidable}).

\paragraph{Why a static property, and not a better audit.} Reporting
checklists and reproducibility audits have become the community's main
instrument for disciplining evaluation, and they are a genuine advance:
a recent audit of thirty large-language-model trading studies shows that
the assumptions determining whether a result is economically
interpretable---point-in-time controls, execution timing, universe
construction, artifact release---are far less consistently reported than
the architectures they
accompany~\cite{exec2026assumptions}. Our contribution is complementary
to that programme and addresses its ceiling. A checklist records
\emph{that} an author attests to point-in-time discipline; it cannot
verify that the pipeline realises it, and the audit itself notes that
the needed detail is frequently unrecoverable from the artifact. A
formal property is exactly the missing verifier: for a pipeline in the
fragment, ``point-in-time correct'' ceases to be an attestation and
becomes a checkable fact. The agentic setting sharpens the argument.
When look-ahead can enter through a model's pretraining corpus or a
retrieval index rather than through the pipeline's
code~\cite{sarkar2025lookahead,ye2026alpha,li2025profit}, even a perfect
audit of the code cannot rule it out. Our calculus makes this precise
rather than dissolving it: a retrieval or tool-call step is admissible
in the fragment only when its result carries an availability bounded by
the decision epoch, so a pipeline whose retrieval reaches into a corpus
of unbounded vintage is rejected---the formal system declines to certify
exactly the configuration the empirical work has found to be dangerous.
It does not, and cannot, inspect the model's weights; what it can do is
refuse to accept a pipeline whose declared availability structure admits
the leak, which is the appropriate boundary for a static discipline.

\paragraph{Threats to validity.} The empirical claims carry scope limits
we state plainly. The comparison corpus (\S\ref{sec:eval-comparison}) is
adversarially constructed to separate the tools, so its miss rates are
existence-and-robustness evidence---leaks of these kinds exist that
defeat the detectors and are caught by the checker---and not estimates of
how often leaks occur in naturally written pipelines; we make no such
frequency claim. The corpus is small and clustered by construction
mechanism, which is why inference uses a cluster bootstrap and a
cluster-respecting permutation test, and why post-hoc power for the
checker-versus-two-run comparison ($76.7\%$) sits below the conventional
threshold even as the permutation and Wilcoxon tests independently
corroborate the effect. The real-data validation
(\S\ref{sec:eval-oracle}) uses an independent dynamic oracle whose
witnesses establish leaks but whose silence establishes nothing, by the
same non-recursive-enumerability of freedom that limits every dynamic
method; the AR3 case, where the checker soundly rejects a leak the oracle
does not happen to witness, is the predicted phenomenon rather than a
discrepancy to explain away. Finally, soundness is a property of the
calculus and its checker, not of an arbitrary Python backtest: the
guarantee applies to pipelines expressed in, or faithfully modelled by,
the fragment, and extends to a real system only insofar as that system's
availability structure is captured by the model.

\paragraph{Limitations and future work.} Three directions follow
directly. First, the relative-completeness ceiling is a matter of value
opacity; enriching the value layer with lightweight algebraic reasoning
(recognising self-cancellation, or arguments provably ignored) would
convert some of the characterised false positives into acceptances
without touching soundness, and the precise, confined nature of those
false positives makes this a well-posed target rather than an open-ended
one. Second, the fragment's guarantee currently rests on the pipeline
being expressed in the calculus; closing the gap to deployed code means
inferring availability structure from real backtesting and
feature-engineering frameworks, so that the check can be run against an
existing pipeline rather than a re-encoding of it---a program-analysis
problem the value-independence of availability should make tractable.
Third, the agentic treatment here bounds availability at the retrieval
interface but treats the underlying model as a black box; characterising
when a model's own pretraining cutoff can be folded into the availability
lattice, so that ``the model knows nothing after $t$'' becomes part of
the certified property rather than an external assumption, is the natural
next step toward certifying agentic pipelines end to end.

\section{Conclusion}
\label{sec:conclusion}

Look-ahead bias has been treated, across quantitative finance and
machine learning alike, as a hazard to be managed by discipline and
caught by detectors---an approach that is sound only channel by channel
and that certifies nothing by its silence. We have argued that it is
instead a formal property in disguise: fixing a decision epoch, the
demand that the future not influence the present is temporal
non-interference over a time-indexed information lattice. That single
identification imports the full apparatus of information-flow theory,
and with it both a hard limit and a usable guarantee. On a pipeline
language where a datum's availability may be computed from data values,
look-ahead-freedom is undecidable and $\Pi^0_1$-hard, so leakage is
recursively enumerable but freedom is not---which is exactly why
detectors can exhibit leaks yet never certify their absence. On the
value-independent fragment that covers the pipelines practitioners
actually write---windowing, rolling statistics, normalisation,
resampling, feature joins, point-in-time and vintage reads, and agentic
retrieval steps---freedom is decidable, and a type-and-effect system
certifies it soundly in time linear in the size of the pipeline. The
empirical study confirms the picture the theory predicts: the check
scales linearly, an independent oracle finds no accepted pipeline to be
leaky, and against differential and window-tiling detectors the sound
checker catches every planted leak they miss, at the cost of a precisely
delimited set of false positives on opaque value operations. The result
converts look-ahead-freedom from a property a team hopes it has achieved
into one a checker can certify, and it does so with a boundary---sound
and decidable where pipelines are real, provably impossible where they
are not---that is drawn by the theory rather than assumed. We regard the
identification itself as the durable contribution: it gives a problem
the evaluation community manages by vigilance a place in a theory built
precisely to settle it.

\section*{Data and Code Availability}

The artifact accompanying this paper contains a clean-room implementation
of the checker for the calculus of
Sections~\ref{sec:calculus}--\ref{sec:series-hardening}, the two baseline
detectors, the dynamic oracle, the full adversarial corpus, the scaling
benchmark, and the statistical-analysis scripts, together with the
pre-registered analysis plan. It carries no proprietary code, market
data, or strategy content.

The empirical figures reported in Section~\ref{sec:evaluation} were
computed from proprietary market data licensed for the author's use and
\emph{not} redistributable under the terms of those licenses. The
released artifact therefore does not include that data. In its place it
ships synthetic stand-ins for the five archetypes (AR2--AR5 and the
weekly-release corpus anchor) that share the schemas and reproduce the
leak mechanisms of the originals---reference-versus-availability
misalignment, restatement vintages, post-close aggregation, and
survivorship---so that the artifact runs end to end on a clean checkout
with no access to the licensed data.

On these synthetic data the artifact reproduces every \emph{qualitative}
result of Section~\ref{sec:evaluation} exactly: the checker's zero
false-negative rate on the planted leaks, the false-negative rates of
the two baseline detectors, the confinement of the checker's false
positives to the opaque-value-operation cases, the oracle's failure to
witness any leak in an accepted pipeline, and near-linear scaling of the
check. Because the synthetic data do not share the size and shape of the
proprietary series, quantities that depend on that structure---most
notably the exact scaling exponent---differ from the values reported
here; the released run yields a slope consistent with linear growth but
not the identical figure. The artifact reproduces the method and the
claims, not the proprietary-data numbers.

Upon acceptance the artifact will be deposited in a public repository
with a permanent DOI, released under an open-source license for the code
and an open-data license for the synthetic data, and submitted for
evaluation under the ACM Artifact Review and Badging process.

\bibliographystyle{ACM-Reference-Format}
\bibliography{references}

\end{document}